   \def\@citecolor{blue}%
   \def\@urlcolor{blue}%
   \def\@linkcolor{blue}%
\def\orcidID#1{\smash{\href{http://orcid.org/#1}{\protect\raisebox{-1.25pt}{\protect\includegraphics{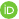}}}}}
\Crefname{figure}{Fig.}{Figs.}
\setlist[description]{leftmargin=0pt,labelindent=0pt,itemsep=2pt,topsep=2pt}
\newcommand{\Aa}{{\mathcal{A}}}
\newcommand{\Gg}{{\mathcal G}} 
\newcommand{\toss}{{\mathsf{t}}}
\newcommand{\Max}{\mathsf{Max}}
\newcommand{\Min}{\mathsf{Min}}
\newcommand{\PSpace}{$\mathbf{PSPACE}$}
\newcommand{\SharpP}{$\#\mathbf{P}$}
\newcommand{\UP}{$\mathbf{UP}$}
\newcommand{\NP}{$\mathbf{NP}$}
\newcommand{\NL}{$\mathbf{NL}$}
\newcommand{\classicRTG}{{random-turn~game}} 
\newcommand{\sgameone}{{unfolding~random-turn~game}} 
\newcommand{\SGAMEone}{{Unfolding~Random-Turn~Game}} 
\newcommand{\sgametwo}{{random~arena~game}} 
\newcommand{\Sgametwo}{{Random~arena~game}} 
\newcommand{\SGAMEtwo}{{Random~Arena~Game}} 
\newcommand{\own}{{\mathrm{own}}} 
\newcommand{\seq}[1]{\langle #1 \rangle} 
\newcommand{\Plays}{\mathsf{Plays}}
\newcommand{\incl}{{\subseteq}}
\newcommand{\modify}[1]{\textcolor{black}{#1}}
\begin{document}\sloppy
\title{The Complexity of Games with Randomised Control}
%
%
%
\author{
Sarvin Bahmani\orcidID{0009-0002-4476-489X} 
\and
Rasmus Ibsen-Jensen\orcidID{0000-0003-4783-0389} \and
Soumyajit Paul\orcidID{0000-0002-7233-2018} \and 
Sven Schewe\orcidID{0000-0002-9093-9518} \and 
Friedrich Slivovsky\orcidID{0000-0003-1784-2346} \and 
Qiyi Tang\orcidID{0000-0002-9265-3011} \and
Dominik Wojtczak\orcidID{0000-0001-5560-0546} \and
Shufang Zhu\orcidID{0000-0002-5922-8750}}

%

\institute{
University of Liverpool, Liverpool, UK\\
\email{\{r.bahmani, r.ibsen-jensen, soumyajit.paul, sven.schewe, f.slivovsky, qiyi.tang, dkw, shufang.zhu\}@liverpool.ac.uk}
}
\authorrunning{Bahmani et al.}
%
%
\maketitle              

\begin{abstract}
We study the complexity of solving 
two-player infinite duration games played on a fixed finite graph, where the control of a node 
is not predetermined but rather assigned randomly.  
In classic \emph{random-turn games}, control of each node is assigned randomly every time the node is visited during a play. In this work, we study two natural variants of this where control of each node is assigned only once:   
(i) control is assigned randomly during a play when a node is visited for the first time and does not change for the rest of the play and (ii) control is assigned a priori before the game starts for every node by independent coin tosses and then the game is played. 

We investigate the complexity of computing the winning probability with three kinds of objectives—reachability, parity, and energy. We show that the qualitative questions on all variants and all objectives are \NL-complete. For the quantitative questions, we show that deciding whether the maximiser can win with probability at least a given threshold for every objective is \PSpace-complete under the first mechanism, 
and that computing the exact winning probability for every objective is \SharpP-complete under the second.  
To complement our hardness results for the second mechanism, we propose randomised approximation schemes that efficiently estimate the winning probability \modify{for all three objectives, assuming a bounded number of parity colours and unary-encoded weights for energy objectives,} and we empirically demonstrate their fast convergence.

\keywords{turn-based games, two-player games, random-turn games, stochastic games}
\end{abstract}
\section{Introduction}\label{sec:intro}

Graph games are fundamental framework for modelling and studying decision making scenarios in formal verification~\cite{EmersonJ91,kupferman1997module,DeAlfaro2001279,alur2002alternating} and reactive synthesis~\cite{PR89,Bloem2018}.  In the most classic setting, these are used to capture a system interacting with an adversarial environment. This can be naturally perceived as a two-player game between the system and the environment where the system’s desired requirements are formulated as a winning condition of the game. 
Classic winning conditions on infinite traces include reachability, parity and energy~\cite{fijalkow2023gamesgraphs}. 
In two-player turn-based games on graphs, a token is placed on an initial node of the graph, and the players take turns moving it along the edges, thereby generating an infinite trace.
Each node is typically owned by one of the players: the system selects the successor node on its own nodes, while the environment does so on its respective ones.
A standard assumption in these two-player games on graphs is that the ownership of nodes is fixed before the game begins.
Each node belongs permanently to one of the players, and control never changes hands throughout the game. But this assumption disregards situations where there is some uncertainty in who controls a decision. Randomness being a natural simulator of uncertainty, the controls of a node can be decided with a coin toss, giving rise to what are known as \emph{random-turn games}. 

Random-turn games have been studied for combinatorial games~\cite{Peres2005RandomTurnHA} as well as for infinite duration graph games~\cite{AHI18}. These games have also found applications in bargaining networks~\cite{Celis-et-al-2010}. In random-turn games, every time a node is visited, the ownership of this node is decided randomly and independently. While this mechanism is simple, it also grants the players an unbounded amount of leeway to make non-optimal choices in the initial stages of the game. 
Moreover, this does not capture scenarios where the control of a decision point depends only on some single uncertain event. In this work, we study random-turn games where ownership of a particular node is decided randomly but only once, i.e., the ownership of a node is fixed once it is assigned.
This is already the case in random-turn games for some board games with acyclic state graph as studied in~\cite{Peres2005RandomTurnHA}.

We study two natural variants of this: (i) the ownership of nodes are decided gradually on the fly as the game goes on and fixed for the rest of the game, termed as \emph{\SGAMEone{}}--these games can be perceived as the unfolding of a larger stochastic game; (ii) the ownership of all nodes are decided a priori randomly and independently before the start of the game, termed as \emph{\SGAMEtwo{}}--this is equivalent to randomly selecting a \emph{game arena} for the game.
With randomness in the picture, the fundamental question that we are concerned with is computing the probability that the system player wins. Random-turn games, \sgameone{} and \sgametwo{}, all of them can be reduced to equivalent stochastic games~\cite{Condon92}. 
For random-turn games, this translation is actually in deterministic logarithmic space, resulting in a stochastic game of polynomial size. 
However, it remains unknown whether any stochastic game can be reduced to a structure-preserving (i.e., a simple translation in deterministic logarithmic space), probability-preserving random-turn game.
On the other hand, for \sgameone{} and \sgametwo{} the equivalent stochastic games can suffer from an exponential blow-up in size. In this paper, we consider three classic objectives: reachability, parity and energy. \Cref{exp:games} illustrates these different variants of games with randomised control on a simple graph in \Cref{fig:explicit-game}(a) with a reachability objective.



\begin{example}\label{exp:games}

\begin{figure}[t]

\scalebox{.78}{
\begin{tikzpicture}[scale=1, >=latex', shorten >=2pt, node distance=.7cm and 1.2cm, on grid, auto]
    \begin{scope}[shift={(-3,0)}]
    \node[state, initial, initial text=] (n1) {$v_0$};
    \node[state, above right=of n1] (n2) {$v_1$};
    \node[state, below right=of n1] (n3) {$v_2$};
    \node[state, right=2.7cm of n1] (n4) {$\top$};

    \path[->] (n1) edge[bend left] (n2);
    \path[->] (n2) edge[bend left] (n1);
    \path[->] (n1) edge[bend left] (n3);
    \path[->] (n3) edge[bend left] (n1);
    \path[->] (n2) edge (n4);
    \path[->] (n3) edge (n4);
    \path[->] (n4) edge[loop above] (n4);

    \node[below right=1cm and 0cm of n3] {\parbox{6cm}{(a) A graph with reachability objective of reaching $\top$.}};
    \end{scope}
    
    \begin{scope}[scale=0.8, >=latex', shorten >=1pt, node distance=0.45cm and 1.3cm, on grid, auto, shift={(3,0)}]
    \node[state, dashed, initial, initial text=] (v0) {$v_0$};
    \node[state, above right=of v0, fill=blue!20] (v0+) {$v_0^+$};
    \node[state, below right=of v0, pattern={Lines[angle=0,distance=3pt,line width=0.3pt]}, pattern color=red!50] (v0-) {$v_0^-$};
    
    \node[state, dashed,right=2.6cm of v0] (v1) {$v_1$};
    \node[state, above right=of v1, fill=blue!20] (v1+) {$v_1^+$};
    \node[state, below right=of v1, pattern={Lines[angle=0,distance=3pt,line width=0.3pt]}, pattern color=red!50] (v1-) {$v_1^-$};  
    
    \node[state, dashed, right=2.6cm of v1] (v2) {$v_2$};
    \node[state, above right=of v2, fill=blue!20] (v2+) {$v_2^+$};
    \node[state, below right=of v2, pattern={Lines[angle=0,distance=3pt,line width=0.3pt]}, pattern color=red!50] (v2-) {$v_2^-$};     
    
    \node[state, right=2.6cm of v2] (t) {$\top$};

    \path[->] (v0) edge (v0+);
    \path[->] (v0) edge (v0-);

    \path[->] (v0+) edge (v1);
    \path[->] (v0-) edge (v1);
    \path[->] (v0+) edge [bend left=39] (v2);
    \path[->] (v0-) edge [bend left=-39] (v2);
  
    \path[->] (v1) edge (v1+);
    \path[->] (v1) edge (v1-);
    \path[->] (v1+) edge [bend right=39] (v0);
    \path[->] (v1+) edge [bend left=39] (t);
    
    \path[->] (v1-) edge [bend right=-39] (v0);
    \path[->] (v1-) edge [bend left=-39] (t);

    \path[->] (v2) edge (v2+);
    \path[->] (v2) edge (v2-);
    \path[->] (v2+) edge (t);
    \path[->] (v2-) edge (t);
    
    \path[->] (v2+) edge [bend right=40] (v0);
    \path[->] (v2-) edge [bend right=-40] (v0);

    \path[->] (t) edge [loop above] (t);
    
    \node[below right=1.9cm and 1cm of v1] {\parbox{8cm}{(b) Stochastic game reduced from the {\classicRTG}.}}; 

    \end{scope}
    
    \begin{scope}[shift={(-2,-3.5)}]   
    \node[state] (n1) {$v_0$};
    \node[state, above right=of n1] (n2) {$v_1$};
    \node[state, below right=of n1] (n3) {$v_2$};
    \node[state, right=2.4cm of n1] (n4) {$\top$};

    \path[->] (n1) edge[bend left] (n2);
    \path[->] (n2) edge[bend left] (n1);
    \path[->] (n1) edge[bend left] (n3);
    \path[->] (n3) edge[bend left] (n1);
    \path[->] (n2) edge (n4);
    \path[->] (n3) edge (n4);
    \path[->] (n4) edge[loop above] (n4);
    \fill[black] ([yshift=2mm] n1.north) circle (2pt);
    
    \node[state, below left =2.7cm and 3cm of n3, fill=blue!20] (0n1) {$v_0$};
    \node[state, above right=of 0n1] (0n2) {$v_1$};
    \node[state, below right=of 0n1] (0n3) {$v_2$};
    \node[state, right=2.4cm of 0n1] (0n4) {$\top$};

    \path[->] (0n1) edge[bend left] (0n2);
    \path[->] (0n2) edge[bend left] (0n1);
    \path[->] (0n1) edge[bend left] (0n3);
    \path[->] (0n3) edge[bend left] (0n1);
    \path[->] (0n2) edge (0n4);
    \path[->] (0n3) edge (0n4);
    \path[->] (0n4) edge[loop above] (0n4);
    
    \fill[black] ([yshift=2mm] 0n1.north) circle (2pt);

    \node[state, below right =2.7cm and 1cm of n3, pattern={Lines[angle=0,distance=3pt,line width=0.3pt]}, pattern color=red!50] (1n1) {$v_0$};
    \node[state, above right=of 1n1] (1n2) {$v_1$};
    \node[state, below right=of 1n1] (1n3) {$v_2$};
    \node[state, right=2.4cm of 1n1] (1n4) {$\top$};

    \path[->] (1n1) edge[bend left] (1n2);
    \path[->] (1n2) edge[bend left] (1n1);
    \path[->] (1n1) edge[bend left] (1n3);
    \path[->] (1n3) edge[bend left] (1n1);
    \path[->] (1n2) edge (1n4);
    \path[->] (1n3) edge (1n4);
    \path[->] (1n4) edge[loop above] (1n4);
    
    \fill[black] ([yshift=2mm] 1n1.north) circle (2pt);
    \node[draw, rounded corners, thick, fit=(n1)(n2)(n3)(n4),dashed] (boxTop) {};
    \node[draw, rounded corners, thick, fit=(0n1)(0n2)(0n3)(0n4),] (boxLeft) {};
    \node[draw, rounded corners, thick, fit=(1n1)(1n2)(1n3)(1n4),](boxRight) {};

    \node[below left=2cm and 1cm of boxLeft] (voidA) {};
    \node[below right=2cm and 1cm of boxLeft] (voidB) {};
    \node[below left=2cm and 1cm of boxRight] (voidC) {};
    \node[below right=2cm and 1cm of boxRight] (voidD) {};


    \draw[->, very thick] (boxTop.south) -- node[midway, left=5mm, font=\small]{$\frac{1}{2}$} (boxLeft.north);
    \draw[->, very thick] (boxTop.south) -- node[midway, right=5mm, font=\small]{$\frac{1}{2}$} (boxRight.north);
    
    \draw[->, very thick] (boxLeft.south) -- node[midway, left=2mm, font=\small]{$v_1$} (voidA);
    \draw[->, very thick] (boxLeft.south) -- node[midway, right=2mm, font=\small]{$v_2$} (voidB);

    \draw[->, very thick] (boxRight.south) -- node[midway, left=2mm, font=\small]{$v_1$} (voidC);
    \draw[->, very thick] (boxRight.south) -- node[midway, right=2mm, font=\small]{$v_2$} (voidD);
    
    \node[below=6cm of boxTop] {\parbox{8cm}{(c) Stochastic (explicit) game reduced from the {\sgameone} -- the first two levels.}};
    \end{scope}
    
    \begin{scope}[shift={(4,-4.2)}]
    \node[state, fill=blue!20] (0n1) {$v_0$};
    \node[state, fill=blue!20, above right=of 0n1] (0n2) {$v_1$};
    \node[state, below right=of 0n1, pattern={Lines[angle=0,distance=3pt,line width=0.3pt]}, pattern color=red!50] (0n3) {$v_2$};
    \node[state, right=2.4cm of 0n1] (0n4) {$\top$};

    \path[->] (0n1) edge[bend left] (0n2);
    \path[->] (0n2) edge[bend left] (0n1);
    \path[->] (0n1) edge[bend left] (0n3);
    \path[->] (0n3) edge[bend left] (0n1);
    \path[->] (0n2) edge (0n4);
    \path[->] (0n3) edge (0n4);
    \path[->] (0n4) edge[loop above] (0n4);

    \node[state, right =3.5cm of 0n1, fill=blue!20] (1n1) {$v_0$};
    \node[state, above right=of 1n1, pattern={Lines[angle=0,distance=3pt,line width=0.3pt]}, pattern color=red!50] (1n2) {$v_1$};
    \node[state, fill=blue!20, below right=of 1n1] (1n3) {$v_2$};
    \node[state, right=2.4cm of 1n1] (1n4) {$\top$};

    \path[->] (1n1) edge[bend left] (1n2);
    \path[->] (1n2) edge[bend left] (1n1);
    \path[->] (1n1) edge[bend left] (1n3);
    \path[->] (1n3) edge[bend left] (1n1);
    \path[->] (1n2) edge (1n4);
    \path[->] (1n3) edge (1n4);
    \path[->] (1n4) edge[loop above] (1n4);
    
    \node[draw, rounded corners, thick, fit=(0n1)(0n2)(0n3)(0n4),] (boxLeft) {};
    \node[draw, rounded corners, thick, fit=(1n1)(1n2)(1n3)(1n4),](boxRight) {};

    \node[state, below=2.7cm of 0n1, fill=blue!20] (2n1) {$v_0$};
    \node[state, above right=of 2n1, fill=blue!20] (2n2) {$v_1$};
    \node[state, below right=of 2n1, fill=blue!20] (2n3) {$v_2$};
    \node[state, right=2.4cm of 2n1] (2n4) {$\top$};

    \path[->] (2n1) edge[bend left] (2n2);
    \path[->] (2n2) edge[bend left] (2n1);
    \path[->] (2n1) edge[bend left] (2n3);
    \path[->] (2n3) edge[bend left] (2n1);
    \path[->] (2n2) edge (2n4);
    \path[->] (2n3) edge (2n4);
    \path[->] (2n4) edge[loop above] (2n4);

    \node[state, right =3.5cm of 2n1, pattern={Lines[angle=0,distance=3pt,line width=0.3pt]}, pattern color=red!50] (3n1) {$v_0$};
    \node[state, above right=of 3n1, fill=blue!20] (3n2) {$v_1$};
    \node[state, below right=of 3n1, fill=blue!20] (3n3) {$v_2$};
    \node[state, right=2.4cm of 3n1] (3n4) {$\top$};

    \path[->] (3n1) edge[bend left] (3n2);
    \path[->] (3n2) edge[bend left] (3n1);
    \path[->] (3n1) edge[bend left] (3n3);
    \path[->] (3n3) edge[bend left] (3n1);
    \path[->] (3n2) edge (3n4);
    \path[->] (3n3) edge (3n4);
    \path[->] (3n4) edge[loop above] (3n4);
    
    \node[draw, rounded corners, thick, fit=(2n1)(2n2)(2n3)(2n4)] (boxLeftBelow) {};
    \node[draw, rounded corners, thick, fit=(3n1)(3n2)(3n3)(3n4),](boxRightBelow) {};
    
    \node[below left=5.3cm and 0.8cm of 1n1] {\parbox{5.2cm}{(d) All four winning arenas for the {\sgametwo}.}};
    \end{scope}
    
\end{tikzpicture}
}
\caption{
Blue nodes are controlled by $\Max$, pink nodes with dashed patterns by $\Min$, white nodes with dashed contours are random, and white nodes with solid contours are unassigned. 
In (c) and (d), boxes denote game states.
Black dots above some nodes indicate token positions.
Transition probabilities from each random node are uniformly distributed.
}
\label{fig:explicit-game}
\end{figure}
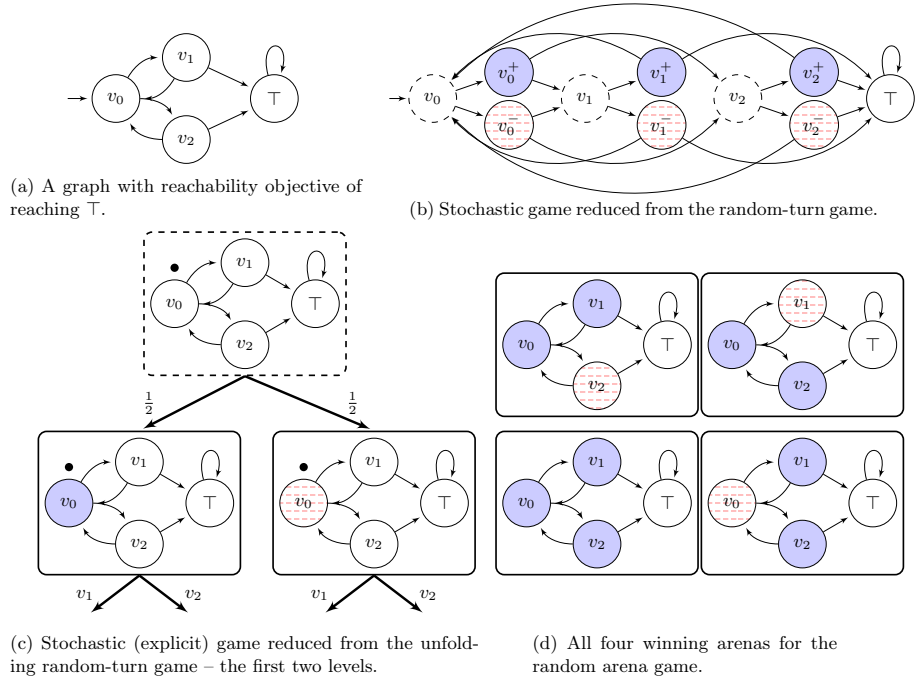

In the game graph in \Cref{fig:explicit-game}(a), the token is initially on state $s_0$. The players are $\Max$ and $\Min$, and $\Max$'s objective is to maximise the probability of reaching the target node $\top$. Let us consider the 
games with randomised control
on this graph where ownership is always determined uniformly randomly. First, in the \sgametwo{}, each node from $\{v_0,v_1,v_2\}$ is assigned to $\Max$ or $\Min$ with probability $\frac{1}{2}$. Hence, each of the $2^3 =8$ possible \emph{game arenas} is obtained with probability $\frac{1}{8}$. Among these, $\Max$ has a winning strategy in 4 of them, with ownership among one of $\{v_0,v_1\},\{v_0,v_2\},\{v_0,v_1,v_2\},\{v_1,v_2\}$; see \Cref{fig:explicit-game}(d). Hence, $\Max$ wins \sgametwo{} with probability $\frac{1}{2}$ on this graph. 

On the other hand, in the \sgameone{} on this graph, the initial node $v_0$ is assigned to $\Max$ or $\Min$ with probability $\frac{1}{2}$.
After the assignment, the respective player chooses a node among $v_1$ or $v_2$. By symmetry, since both options are equally good, suppose w.l.o.g.\ they choose $v_1$.
Next, $v_1$ is assigned randomly. If $v_0$ was assigned to $\Min$, then $\Max$ loses when $\Min$ gets $v_1$, as $\Min$ can play the cycle $v_0, v_1, v_0$. $\Max$ wins in the other case, when $\Max$ gets $v_1$.
If $v_0$ was assigned to $\Max$, then $\Max$ wins when $\Max$ gets $v_1$, whereas if $\Min$ gets $v_1$, they would move the token back to $v_0$. Since $\Max$ already owns $v_0$, they will then move to the still unassigned $v_2$. In the random assignment of $v_2$, $\Max$ wins if $v_2$ is assigned to $\Max$ and loses otherwise.
Consequently, the probability of $\Max$ winning is $\frac{1}{2}(\frac{1}{2}+\frac{3}{4}) = \frac{5}{8}$. 

Moreover, for the \classicRTG{} on this graph, $\Max$ wins almost surely, that is, with probability one, since $\Max$ wins whenever they obtain $v_1$ or $v_2$, which occurs with probability one. 
\qed
\end{example}

\paragraph*{\bf Our Contributions.}


We begin by considering the problem of whether there exists a deterministic logarithmic-space translation from a stochastic game to a random-turn game that preserves the winning probability. 
We show that this is unlikely: first, we prove that the qualitative question for the random-turn game with a reachability objective is \NL-complete; then, we infer that a simple structure- and probability-preserving translation would imply $\mathbf{NL} = \mathbf{P}$.

The main focus of this paper is on two new and distinct ways of assigning control: (i) ownership is decided dynamically when a node is visited for the first time, and (ii) ownership is determined a priori by independent coin tosses for each node. 
These models differ from standard stochastic games, and we show that they exhibit markedly different computational complexities: the quantitative problems are \PSpace-complete for (i) and \SharpP-complete for (ii), while both retain \NL-completeness for the qualitative questions. 
These complexity results hold for safety, reachability, parity, and energy objectives.
For (ii), we complement the \SharpP-completeness result with a randomised approximation scheme that can efficiently estimate the probability of winning \modify{for all three objectives, assuming a bounded number of parity colours and unary-encoded weights for energy objectives}. 
Missing proofs can be found in the full version.

\paragraph*{\bf Related Work.}

Previous work has investigated \emph{random-turn games}~\cite{Peres09}, where ownership of states is (re-)assigned randomly each time a state is visited.  
Such games can be transformed, in deterministic logarithmic space, into \emph{stochastic games}~\cite{Condon92} with the same winning probabilities, and are therefore often regarded as a subclass of stochastic games.  
Since stochastic games with the objectives we consider can be solved in \NP~$\mathrel{\cap}$~co-\NP~\cite{Condon92,ZwickP96,ChatterjeeJH04StochasticParity,ChatterjeeH08StochasticMeanPayoff}, the same complexity bound applies to random-turn games.  
It has been shown that random-turn games with out-degree~2 can be solved in polynomial time~\cite{AHI18}, implying that there is no simple degree-preserving reduction in the opposite direction, as \emph{simple stochastic games}---that is, stochastic games where all random vertices have out-degree~2---are already as hard as general stochastic games~\cite{ZwickP96}.  

A rich body of work also explores \emph{graph games} in which control is decided through alternative mechanisms.  
In \emph{bidding games}, introduced by~\cite{lazarus1999combinatorial,lazarus1996richman}, each turn involves an ``auction'' between the two players to determine who moves the token.  
Different bidding rules have been studied, including \emph{Richman bidding}~\cite{avni2019infinite} (named after David Richman), where the higher bidder pays the lower bidder, and \emph{poorman bidding}~\cite{AHI18}, where the higher bidder pays the bank and the payment is lost.  
It was observed in~\cite{lazarus1999combinatorial} that \emph{reachability} Richman-bidding games (and only those) are equivalent to random-turn games.  

Another related model is that of \emph{pawn games}~\cite{AvniGG25pawns}, where control of vertices also evolves dynamically during play.  
Here, control is determined by ``pawns'': each pawn owns a subset of vertices in the game graph, and pawns are distributed among players, with ownership possibly changing throughout the game.  
As noted in the seminal work on pawn games~\cite{AvniGG25pawns}, these games differ fundamentally from bidding games: pawn games support richer and more flexible mechanisms for transferring control, whereas bidding games rely solely on strict auction-based mechanisms.  

\section{Preliminaries}\label{sec:prelim}

\noindent{\bf Game Graph and Arena.} 
In this work, we study the turn-based zero-sum games that are played between two players ---$\Max$ and $\Min$---on finite directed graphs, which are called \emph{game arenas}.
A finite directed graph is $G = (V,E)$, where $V$ is a non-empty set of nodes and $E \mathrel{\incl} V \times V$ is a set of non-empty edges. Since our games are potentially infinite duration games, we only consider graphs with no dead ends, i.e., for each $u \in V$, for some $v$ in $V$, $(u,v) \in E$.

A game \emph{arena} on a graph $G = (V, E)$ is a tuple $\Aa = (G,V_0,V_1,v_0)$ where $V$ is the set of nodes, $E \subseteq V \times V$ is the set of edges, and $v_0$ is the start node.
The set of nodes $V$ is partitioned into a set $V_0$ of nodes controlled by player $\Max$
and a set $V_1$ of nodes controlled by player $\Min$. 
A play of such a game starts by placing a token on the start node $v_0$.
The player controlling this node then chooses a successor node
$v_1$ such that $(v_0, v_1) \in E$ and the token is moved to this successor node. 
In the next turn the player controlling the node $v_1$ chooses the successor node $v_2$ with $(v_1, v_2) \in E$ and the token is moved accordingly. 
Both players move the token over the arena in this manner and thus form a play of the game.
Note that, on a graph $G$ with node set $V$, the total number of possible arenas is $2^{|V|}$.

\noindent{\bf Play and Strategies.}
Formally, a \emph{play} of a game $\Aa$ is a finite (resp. an infinite) sequence of nodes $\seq{v_0, v_1, \ldots} \in V^*$ (resp. $\seq{v_0, v_1, \ldots} \in V^\omega$) such that, for all $i \geq 0$, we have that $(v_i, v_{i+1}) \in E$.  
We write $\Plays_\Aa(v)$ for the set of plays of the game $\Aa$ that start from a node $v \in V$ and $\Plays_\Aa$ for the set of plays of the game. 
We omit the subscript when the arena is clear from the context. 

A \emph{strategy} for player $\Max$ on a given game arena, is a function $\sigma: V^*V_0 \rightarrow V$ such that $\big(v,\sigma(\rho,v)\big)\in E$ for all $\rho \in V^*$ and $v \in V_0$.
A strategy $\sigma$ is called memoryless if $\sigma$ only depends on the last state ($\sigma(\rho,v) = \sigma(\rho',v)$ for all $\rho,\rho' \in V^*$ and $v \in V_0$). 
A play $\seq{v_0, v_1, \ldots}$ is consistent with $\sigma$ if, for every initial sequence $\rho_n = v_0, v_1, \ldots,v_n$ of the play that ends in a state of player $\Max$ ($v_n \in V_0$), $\sigma(\rho_n)=v_{n+1}$ holds.
Strategies for player $\Min$ are defined analogously. 

\noindent\modify{
{\bf Intuitive Notion of Value.}
Given fixed strategies for both players, that is, a strategy profile, the outcome of the game is an infinite play, which induces a probability that the objective of player $\Max$ is satisfied. We refer to this probability as the value of the strategy profile. The value of a strategy for a player is then defined as the probability of winning against an optimal strategy of the opponent. Finally, the value of the game (from a given initial vertex) is the winning probability that player $\Max$ can guarantee when both players play optimally. Intuitively, this value captures how likely player $\Max$ is to achieve the objective under optimal play.}

\noindent{\bf Game Objectives.}
We consider three objectives: reachability, parity and energy.
For \emph{reachability} games, we have a target node $\top \in V$.
If $\top$ is visited during a play, player $\Max$ is declared the winner and the game immediately terminates.
Otherwise, if $\top$ is never visited, player $\Min$ is the winner.

For \emph{parity} objectives, we equip the game with a priority function $\mathrm{pri}: V \to \mathbb{N}$. 
A play $\seq{v_0, v_1, \ldots}$ is won by player $\Max$ if the maximal priority seen infinitely often is even, that is, $\limsup_{i \rightarrow \infty}\mathrm{pri}(v_i)$ is even, and by player $\Min$ otherwise. 

For \emph{energy} objectives, the game has an initial credit $c \in \mathbb{N}$ and a weight function $w: V \to \mathbb{Z}$.
The goal of player $\Max$ is to construct an infinite play $\seq{v_0, v_1, \ldots}$ such that 
\begin{equation}\label{eq:energy-nonnegative}
    c + \sum_{i=0}^{j} w(v_i) \ge 0 \; \text{for all} \; j \ge 0.
\end{equation}
The quantity $c + \sum_{i=0}^{j} w(v_i)$ is called the \emph{energy level} of the play prefix $\seq{v_0, v_1, \ldots, v_j}$. 
A play $\seq{v_0, v_1, \ldots}$ is won by player $\Max$ if it satisfies \Cref{eq:energy-nonnegative}, otherwise it is won by player $\Min$. 

A node $v \in V$ is winning for player $\Max$ (resp. $\Min$) if there exists a winning strategy for player $\Max$ (resp. $\Min$) from $v$. 
All three games are memoryless determined \cite{Ehrenfeucht1979,Gurvich88,mostowski1991games,EmersonJ91,BouyerFLMS08}; that is, for all $v \in V$, $v$ is winning for either player $\Max$ or player $\Min$, and memoryless strategies are sufficient.

\noindent{\bf Games with Randomised Control.}
In a game with randomised control, we play on a fixed game graph $G$ with a designated starting node $v_0$, where the objective may be reachability, parity, or energy.
Although the graph $G$ is fixed, the control of its nodes is not predetermined; initially there is no partition of $V$ into $V_0$ and $V_1$.
Instead, the control of each node $v \in V$ is determined randomly and independently by a coin toss: 
given a function $\toss: V \to (0,1)$, the node $v$ is assigned to player $\Max$ with probability $\toss(v)$~(i.e., $v \in V_0$) and to player $\Min$ with probability $1 - \toss(v)$ (i.e., $v \in V_1$).
Thus, the partition $(V_0,V_1)$ is determined entirely by these coin tosses.

Let $\own: V \rightharpoonup \{\Max, \Min\}$ be a partial function that, when defined, assigns a player to a node of the graph. 
The game starts with an empty ownership function, $\own = \emptyset$.
The ownership of the nodes is determined randomly during gameplay.  
We consider two variations of the classic random-turn games~\cite{Peres09}, 
in which the player who moves in each round is determined by a coin toss. 
In an \emph{{\sgameone}}, the ownership of a node is decided upon its first visit and remains fixed thereafter. 
In a \emph{{\sgametwo}}, the ownership of all nodes is determined at the start of the game.

\noindent{\bf Stochastic Game.}
A stochastic game is also a turn-based game played between players $\Max$ and $\Min$ on a game arena.
In contrast to non-stochastic games, stochastic games include random nodes.
Formally, the set of nodes $V$ is partitioned into three disjoint subsets: $V_0$, controlled by $\Max$; $V_1$, controlled by $\Min$; and $V_r$, representing random nodes.
Transitions from random nodes are probabilistic, specified by a transition function $\delta : V_r \to \mathrm{Dist}(V)$, where $\mathrm{Dist}(V)$ denotes the set of probability distributions over $V$.
A \classicRTG{} can be viewed as a special case of a stochastic game.
An \sgameone{}, as we will show later, can be transformed into an exponentially large stochastic game, called the explicit game, while \sgametwo{} can be transformed into a trivial stochastic game with an initial random node that transitions to game arenas with different control assignments, where the transition probability corresponds to the likelihood of generating each random assignment.


    


\section{Qualitative Analysis and Random-Turn Games}\label{sec:qualitative}

In this section, we discuss the qualitative analysis of games with randomised control, focusing on almost-sure and sure winning for $\Max$ in {\classicRTG}s, {\sgameone}s, and {\sgametwo}s.
A game is almost-surely won by player $\Max$ if it is won with probability one, and surely won if $\Max$ wins regardless of the outcomes of the coin tosses.
We show that it is \NL-complete to decide whether $\Max$ wins surely or almost-surely for {\classicRTG}s with a reachability objective, and for {\sgameone}s and {\sgametwo}s with reachability, parity, or energy objectives.
We also discuss reductions between {\classicRTG}s and stochastic games.

We first show deciding whether $\Max$ almost surely wins a random-turn game with a reachability objective is \NL-complete. 
The reduction is from st-connectivity, a well-known \NL-complete problem.
\begin{restatable}{theorem}{thmAlmostRTG}\label{thm:almost-sure-rtg}
    It is \NL-complete to decide whether $\Max$ almost surely wins a random-turn game with a reachability objective.
\end{restatable}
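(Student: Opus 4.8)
The plan is to reduce the almost-sure question to a purely graph-theoretic reachability condition, and then read off both the \NL upper bound and \NL-hardness from that characterisation. Let $G=(V,E)$ be the game graph with target $\top$ (w.l.o.g.\ carrying a self-loop so there are no dead ends), and let $B\subseteq V$ be the set of \emph{bad} nodes from which $\top$ is not reachable in $G$. Observe that $B$ is closed under $E$: if $v\in B$ then every successor of $v$ lies in $B$, since otherwise $v$ could reach $\top$. I would prove the characterisation: \textbf{$\Max$ almost surely wins from $v_0$ if and only if no node of $B$ is reachable from $v_0$ in $G$}, equivalently every node $v$ with $v_0\rightarrow^* v$ satisfies $v\rightarrow^*\top$.

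The heart of the argument, and the step I expect to be the main obstacle, is this characterisation, which rests on two facts about the random-turn dynamics, both exploiting that every node grants each player control with probability bounded away from $0$ and $1$ (as $\toss\colon V\to(0,1)$). First, if some $w\in B$ is reachable from $v_0$ along a simple path $v_0=w_0,\dots,w_k=w$, then with positive probability $\prod_{i<k}\big(1-\toss(w_i)\big)$ the first $k$ coin tosses all hand control to $\Min$; on this event $\Min$ simply drives the token along the path into the closed set $B$ with $\Max$ never moving, and once inside $B$ the target is never seen, so $\Max$ does not win almost surely. Second, if $B$ is unreachable from $v_0$, then the reachable part of the game stays inside the good region $V\setminus B$, and from every good node there is a path to $\top$ of length at most $|V|$; with probability at least $\big(\min_v\toss(v)\big)^{|V|}>0$ the next tosses hand $\Max$ control of exactly the nodes of such a path, so from any reachable position the probability of hitting $\top$ within $|V|$ steps is bounded below by a fixed constant, and a standard Borel--Cantelli / geometric argument then forces $\top$ to be reached with probability one. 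The delicate points are arguing that the relevant moves are indeed available on the positive-probability event where the player controls the needed nodes, and that the uniform per-window lower bound on the hitting probability survives an adversarial $\Min$ confined to $V\setminus B$.

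Given the characterisation, the \NL upper bound is immediate: $\Max$ \emph{fails} to win almost surely iff there exists $v$ with $v_0\rightarrow^* v$ and $v\not\rightarrow^*\top$. This is decided in \NL by nondeterministically guessing $v$, verifying $v_0\rightarrow^* v$ by guessing a path, and verifying the non-reachability $v\not\rightarrow^*\top$ via the Immerman--Szelepcs\'enyi theorem (non-reachability is in \NL). Hence the complement of our problem is in \NL, so the problem itself lies in co-\NL, which equals \NL.

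For \NL-hardness I would reduce from st-non-connectivity, which is \NL-complete. Given a directed graph $H$ with vertices $s,t$, build $G'$ by adding a fresh target $\top$ with a self-loop, adding an edge $u\to\top$ from every vertex $u\neq t$, keeping all $H$-edges into $t$ but replacing the out-edges of $t$ by a single self-loop $t\to t$, and setting $v_0=s$ and $\toss\equiv\tfrac12$. Then $B=\{t\}$ (each $u\neq t$ reaches $\top$ in one step, while $t$ is trapped), $G'$ has no dead ends, and since $\top$ is a sink no new path into $t$ is created, so $t$ is reachable from $s$ in $G'$ iff it is reachable in $H$. By the characterisation, $\Max$ almost surely wins $G'$ iff $t$ is \emph{not} reachable from $s$ in $H$. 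The construction is computable in logarithmic space, which completes the hardness proof and hence the theorem.
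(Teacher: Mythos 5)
Your proof is correct and follows essentially the same route as the paper: the same characterisation (almost-sure winning iff no node that cannot reach $\top$ is reachable from $v_0$), the same positive-probability arguments for both directions, and the same co-\NL{} upper bound via Immerman--Szelepcs\'enyi. The only cosmetic difference is the hardness reduction: you reduce from st-non-connectivity by adding edges to a fresh target $\top$, whereas the paper reduces from st-connectivity by adding back-edges to $s$; both are valid logspace reductions resting on the same characterisation.
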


For random-turn games, almost-sure winning and sure winning differ in general — for instance, $\Max$ wins the random-turn game played on \Cref{fig:explicit-game}(a) almost surely but not surely, see the equivalent stochastic game in \Cref{fig:explicit-game}(b): 
if it happens that every time $v_1$ and $v_2$ are assigned to $\Min$ when visited, then $\Max$ loses; however, this event occurs with probability zero.
Nevertheless, deciding whether $\Max$ surely wins is also \NL-complete.


This result extends to parity and energy objectives: for parity, $\Max$ does not win surely iff there is a ``bad'' lasso-path whose cycle has an odd maximum priority; for energy, $\Max$ does not win surely iff there is a ``bad'' lasso-path where the energy level drops below zero anywhere on this path or the cycle is negative. 
In both cases, checking the existence of such a lasso-path can be done in \NL.

\begin{restatable}{theorem}{thmSureRTG}\label{thm:sure-rtg}
    It is \NL-complete to decide whether $\Max$ surely wins a random-turn game with a reachability, parity, or energy objective.
\end{restatable}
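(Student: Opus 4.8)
The plan is to collapse the two-player stochastic sure-winning question to a one-player \emph{path-existence} question on the graph, and then recognise the witnesses for losing in nondeterministic logarithmic space, relying on the closure of \NL under complement.

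\textbf{Key reduction to a one-player game.} The first step is the observation that among all possible outcomes of the coin tosses there is a worst one, in which every node is assigned to $\Min$ on each visit; since $\toss(v)\in(0,1)$ this assignment is always possible. In that outcome $\Min$ controls the entire play and may realise any infinite path from $v_0$, whereas handing some nodes to $\Max$ can only shrink the set of realisable plays. I would therefore prove the equivalence: $\Max$ surely wins the \classicRTG{} if and only if \emph{every} infinite path from $v_0$ in $G$ satisfies $\Max$'s objective. The forward direction is the contrapositive (if some path violates the objective, $\Min$ follows it in the all-$\Min$ outcome and wins), and the converse is immediate (if all paths are good, any $\Max$ strategy wins under every outcome). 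Thus ``$\Max$ does not surely win'' becomes ``$\Min$, controlling all nodes, can produce a single violating path from $v_0$.''

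\textbf{Lasso witnesses.} Next I would show that a violating path, if it exists, can be taken in lasso form: a simple stem from $v_0$ followed by a simple cycle. For reachability this is a $\top$-avoiding lasso (stem and cycle both avoid $\top$, so $\Min$ loops forever); for parity it is a lasso whose cycle has odd maximal priority; for energy it is a lasso that either drives the energy below $0$ along its simple stem, or whose cycle has negative total weight, since pumping a reachable negative cycle eventually forces the energy below $0$ regardless of the credit accumulated on the stem. The only delicate case is energy, where a first-violation path need not be simple: here I would argue that any non-negative cycle occurring before the first violation can be excised without raising the final energy, so repeated excision leaves either a simple stem that already reaches a negative level or an exposed reachable negative cycle.

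\textbf{Membership.} ``Not surely win'' is then placed in \NL by guessing the lasso on the fly, storing only the current node, the guessed cycle-entry node, and---for parity---the running maximal priority along the cycle, or---for energy---the running energy level and cycle weight. For the energy counters to fit in logarithmic space I would take the weights and the credit to be encoded in unary (the assumption used in the quantitative part), so that every partial sum along a path of length at most $2|V|$ is polynomially bounded. Since \NL is closed under complement by the Immerman--Szelepcs\'enyi theorem, sure winning itself lies in \NL.

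\textbf{Hardness.} For the lower bound I would reuse the st-connectivity reduction behind \Cref{thm:almost-sure-rtg}, taking the input graph to be a DAG (st-reachability on DAGs is already \NL-complete) so that the only cycles are the ones I insert. Given $(H,s,t)$, I set $v_0=s$, add a self-loop at $t$ that is the unique ``bad'' cycle---for reachability a $\top$-avoiding loop with $\top$ a fresh absorbing node to which all DAG sinks are redirected, for parity a loop of odd priority, for energy a loop of weight $-1$---while every other node and loop is ``good''. Then $t$ is reachable from $s$ iff a bad lasso exists iff $\Max$ does not surely win, so ``not surely win'' is \NL-hard; by closure under complement, so is sure winning. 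The main obstacle I anticipate throughout is the energy objective: both the lasso-excision argument and keeping the energy counters logarithmic (hence the unary-weight assumption) are where the argument requires the most care.
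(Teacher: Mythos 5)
Your proof is correct and follows essentially the same route as the paper's: both characterise failure of sure winning by the existence of a ``bad'' lasso reachable from $v_0$ (target-avoiding for reachability, odd-maximum cycle for parity, energy-violating for energy), place that witness in \NL{} by guessing the lasso on the fly and then invoke closure of \NL{} under complement, and obtain hardness from st-(non-)connectivity on an acyclic/layered graph whose only inserted cycle is the bad one. The only substantive difference is that you are more explicit on the energy case --- the cycle-excision argument and the unary-weight assumption needed to keep the running energy counter logarithmic --- details the paper's proof leaves implicit.
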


As a consequence of the \NL-completeness of the qualitative solutions to random-turn games, there is no simple structure%
\footnote{With structure preserving we mean any simple translation, such as gadgets, that can be calculated in deterministic logarithmic space.} and probability-preserving translation from stochastic games to random-turn games, even when the stochastic games are restricted to non-stochastic two-player games.

\begin{theorem}
If $\mathbf{NL}\neq\mathbf{P}$, then there is no structure- and probability-preserving translation from two-player reachability games to random-turn games.
\end{theorem}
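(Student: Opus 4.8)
The plan is to turn the hypothetical translation into an efficient decision procedure for a $\mathbf{P}$-complete problem, thereby collapsing $\mathbf{P}$ into $\mathbf{NL}$. The $\mathbf{P}$-complete problem I would use is deciding the winner of a two-player reachability game from its start node; this is the classical alternating graph reachability problem, which is $\mathbf{P}$-complete under deterministic logarithmic-space many-one reductions. Suppose, towards a contradiction, that a structure- and probability-preserving translation $T$ exists, so that $T$ is computable in deterministic logarithmic space and, for every two-player reachability game $\Aa$, the random-turn game $T(\Aa)$ has the same winning probability for $\Max$ as $\Aa$.

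First I would exploit determinacy. Since two-player reachability games are memoryless determined, from the start node either $\Max$ wins surely, in which case the value is $1$, or $\Min$ wins surely, in which case the value is $0$. Probability-preservation then forces the winning probability of $T(\Aa)$ to be exactly $1$ in the former case and exactly $0$ in the latter. Consequently, $\Max$ wins $\Aa$ if and only if $\Max$ almost surely wins the random-turn game $T(\Aa)$. This equivalence is precisely what makes $T$, read together with the value-to-probability correspondence, a valid many-one reduction from the winner problem for two-player reachability games to the almost-sure winning problem for random-turn reachability games; note that only the qualitative ($0$ versus $1$) distinction is needed, which is why the qualitative result suffices.

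The final step composes this reduction with \Cref{thm:almost-sure-rtg}, which places almost-sure winning of random-turn reachability games in $\mathbf{NL}$. Because $\mathbf{NL}$ is closed under deterministic logarithmic-space many-one reductions---concretely, one runs the $\mathbf{NL}$ decision procedure on $T(\Aa)$ while recomputing each needed bit of $T(\Aa)$ on demand in logarithmic space, noting that $|T(\Aa)|$ is polynomial so $\log|T(\Aa)| = O(\log|\Aa|)$---the winner problem for two-player reachability games would itself lie in $\mathbf{NL}$. Since that problem is $\mathbf{P}$-complete, we would obtain $\mathbf{P} \subseteq \mathbf{NL}$ and hence $\mathbf{NL} = \mathbf{P}$, contradicting the hypothesis $\mathbf{NL} \neq \mathbf{P}$.

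I expect the main obstacle to be conceptual rather than computational: pinning down the right reading of ``probability-preserving'' for deterministic input games (namely that the game value in $\{0,1\}$ is preserved as the winning probability of the random-turn game), and verifying that composing the logarithmic-space translation with the $\mathbf{NL}$ procedure of \Cref{thm:almost-sure-rtg} stays within $\mathbf{NL}$ despite the possible polynomial blow-up of $T$. Both amount to standard arguments---determinacy and the logspace-composition closure lemma---so the proof is short once the reduction is set up; the only external ingredient is the $\mathbf{P}$-hardness of reachability-game solving, which is classical.
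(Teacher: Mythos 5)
Your proof is correct and follows essentially the same route as the paper's: it composes the hypothetical logspace translation with the $\mathbf{NL}$ algorithm for almost-sure winning of random-turn reachability games (\Cref{thm:almost-sure-rtg}) to place the $\mathbf{P}$-complete problem of solving two-player reachability games in $\mathbf{NL}$. The paper states this in two sentences; you merely spell out the standard details (determinacy forcing the value into $\{0,1\}$ and closure of $\mathbf{NL}$ under logspace reductions) that the paper leaves implicit.
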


\begin{proof}
Solving two-player reachability games is $\mathbf{P}$-complete \cite{Goldschlager1977}.
A deterministic logarithmic space (logspace) reduction that guarantees probability preservation to an \NL-complete problem would therefore imply that we can solve this reachability problem in \NL, and thus $\mathbf{NL} = \mathbf{P}$.
\end{proof}

While deterministic logspace reductions sound unusual for more general games, we note that the reductions from parity to mean-payoff and from mean-payoff to discounted-payoff in \cite{Jurdzinski98} are deterministic  logspace reductions; as $\mathbf{P}$-hardness is not an issue for these games, the inclusion in $\mathbf{L}$ is usually not emphasised.
The argument of non-reducibility extends to parity and energy games, as they can encode reachability with 2 priorities and weights $0$ and $-1$, respectively.

\begin{corollary}
If $\mathbf{NL}\neq\mathbf{P}$, then there is no structure- and probability-preserving translation from two-player parity or energy games to random-turn games.
\end{corollary}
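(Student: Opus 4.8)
The plan is to derive the corollary from the preceding theorem by composition, exploiting the fact that two-player reachability games embed into two-player parity and energy games through cheap, winner-preserving encodings (this is exactly the observation in the remark). First I would make the two logspace transformations precise. For parity: given a two-player reachability game with target $\top$, make $\top$ an absorbing node (a single self-loop) with priority $2$ and give every other node priority $1$. A play reaching $\top$ then has maximal priority $2$ seen infinitely often (even, won by $\Max$), while a play avoiding $\top$ sees only priority $1$ infinitely often (odd, won by $\Min$); hence $\Max$ wins the resulting parity game from $v_0$ exactly when $\Max$ can force reaching $\top$. For energy: make $\top$ absorbing with weight $0$, give every other node weight $-1$, and set the initial credit to $c = |V|$. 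If $\top$ lies in the attractor of $\top$ for $\Max$, it can be forced within $|V|-1$ steps, so the energy level $|V| - k$ at the step reaching $\top$ stays nonnegative and then remains constant; conversely, if $\Min$ can avoid $\top$ forever, the energy drops below $0$. Thus $\Max$ wins the energy game iff $\Max$ can force reaching $\top$. Both transformations are local gadgets computable in deterministic logarithmic space and use only priorities $\{1,2\}$ and weights $\{0,-1\}$.

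Next I would observe that each encoding preserves the value: a two-player reachability game has binary value (player $\Max$ either has a winning strategy or does not), and the encodings preserve which of the two cases holds, so they are both winner-preserving. I would then run the composition argument. Suppose, toward a contradiction, that a structure- and probability-preserving translation from two-player parity games to random-turn games existed. Composing the reachability-to-parity encoding with this hypothetical translation yields a structure- and probability-preserving translation from two-player reachability games to random-turn games: the composite of two deterministic logspace maps is again logspace, and value preservation composes because the source reachability game has value in $\{0,1\}$, which is carried through the parity game (again value in $\{0,1\}$) to a random-turn game of the same $0/1$ value. But the preceding theorem rules exactly this out unless $\mathbf{NL} = \mathbf{P}$, contradicting the hypothesis $\mathbf{NL} \neq \mathbf{P}$. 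The identical argument with the reachability-to-energy encoding handles energy games.

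I expect the logspace and structure-preservation claims for the gadgets, and the composition of value-preservation, to be routine. The step requiring the most care is the energy encoding: I must justify that the finite initial credit $c = |V|$ is large enough to make the energy game equivalent to reachability, which rests on the attractor bound that a forcible reachability target is reachable within $|V|$ steps, together with the fact that when $\top$ is not forcible $\Min$ can drive the energy arbitrarily negative regardless of the credit. I would also make explicit that "probability-preserving" is meaningful here precisely because the source games are deterministic two-player games whose value is $0$ or $1$, so the translated random-turn game has the same $0/1$ value, and the qualitative ($\mathbf{NL}$) solution of random-turn games from \Cref{thm:almost-sure-rtg} and \Cref{thm:sure-rtg} suffices to recover the winner of the original $\mathbf{P}$-complete reachability game.
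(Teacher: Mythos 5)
Your proposal is correct and takes essentially the same route as the paper, which justifies the corollary by exactly the observation you elaborate: reachability embeds into parity via priorities $\{1,2\}$ and into energy via weights $\{0,-1\}$, so a hypothetical translation from parity or energy games to random-turn games would compose with these logspace encodings to contradict the preceding theorem. Your added details (the initial credit $c=|V|$, the attractor bound, and the explicit $0/1$-value composition) merely flesh out what the paper leaves implicit.
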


We now consider almost-sure and sure winning for {\sgameone}s and {\sgametwo}s.
We first observe that sure winning for {\sgameone} and {\sgametwo} is the same as sure winning for random-turn games.
The next thing we observe is that almost-sure winning coincides with sure winning for {\sgameone}s and {\sgametwo}s.
By \Cref{thm:sure-rtg}, we immediately have:
\begin{restatable}{theorem}{thmSureAllGames}\label{thm:sure-almost-sure-all-games}
    It is \NL-complete to decide whether $\Max$ wins an {\sgameone} or a {\sgametwo} almost surely or surely, for reachability, parity, or energy objectives.
\end{restatable}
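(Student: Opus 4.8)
The plan is to reduce the statement to \Cref{thm:sure-rtg} by establishing two equivalences: first, that sure winning for \sgameone{}s and \sgametwo{}s coincides with sure winning for \classicRTG{}s, and second, that almost-sure winning coincides with sure winning for both \sgameone{}s and \sgametwo{}s. Given these, the twelve decision problems in the statement (two winning modes, two control mechanisms, three objectives) all have the same yes/no answer as the corresponding sure-winning problem for \classicRTG{}s, and \Cref{thm:sure-rtg} already settles the latter as \NL-complete; in particular, both \NL membership and \NL-hardness transfer immediately.

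For the first equivalence, I would argue that in all three models $\Max$ surely wins if and only if every infinite play from $v_0$ satisfies $\Max$'s objective, i.e.\ there is no \emph{losing} play. The key observation is that the worst-case coin-toss outcome for $\Max$ assigns every node to $\Min$: handing a node to $\Min$ can never help $\Max$, since $\Min$ then selects the successor adversarially. Under this all-$\Min$ outcome, $\Min$ controls the entire play and can realise any path of the graph, so $\Max$ wins this outcome exactly when no losing play exists. Since sure winning quantifies over \emph{all} outcomes, this all-$\Min$ outcome is decisive in each model---for \sgametwo{}s it is the all-$\Min$ arena, for \sgameone{}s it is the outcome assigning $\Min$ on every first visit, and for \classicRTG{}s it is the outcome assigning $\Min$ at every visit---yielding the same characterisation ``no losing play from $v_0$'' in all three cases. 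This is precisely what \Cref{thm:sure-rtg} decides, using the lasso descriptions already noted (a play is losing when it avoids $\top$ for reachability, has an odd maximal priority on its cycle for parity, or drops below zero or has a negative cycle for energy).

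For the second equivalence, the direction sure $\Rightarrow$ almost-sure is immediate. The substantive direction is its contrapositive: if $\Max$ does not surely win, then $\Max$ loses with positive probability and hence does not win almost surely. By the first equivalence, not winning surely gives a losing play $\pi$ from $v_0$; since the graph is finite, $\pi$ visits only finitely many distinct nodes. For \sgametwo{}s, every arena occurs with positive probability because $\toss(v)\in(0,1)$ for all $v$, so the losing (e.g.\ all-$\Min$) arena has positive probability and $\Max$ loses there with certainty. For \sgameone{}s, consider the positive-probability event that every one of the finitely many distinct nodes on $\pi$ is assigned to $\Min$ on its first visit; under this event $\Min$ controls all these nodes and forces the play along $\pi$, so $\Max$ loses. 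In both cases $\Max$'s overall winning probability is strictly below one.

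The main obstacle I expect is the positive-probability argument for \sgameone{}s: unlike \sgametwo{}s, ownership is revealed incrementally and only for visited nodes, so one must argue that a single losing play can be pinned down by a finite, positive-probability pattern of first-visit coin tosses, using finiteness of the graph to bound the relevant nodes and the assumption $\toss(v)\in(0,1)$ to keep each factor strictly positive. Once both equivalences are in place, the \NL-completeness claim for all three objectives and both winning modes follows directly from \Cref{thm:sure-rtg}.
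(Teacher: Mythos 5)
Your proposal is correct and follows essentially the same route as the paper: both reduce to \Cref{thm:sure-rtg} via the observations that sure winning in all three models is witnessed by the same ``bad'' lasso/losing play (equivalently, the all-$\Min$ assignment is the decisive outcome), and that almost-sure and sure winning coincide for \sgameone{}s and \sgametwo{}s by the contrapositive positive-probability argument on the finitely many nodes of that losing play. The only cosmetic difference is that the paper phrases the witness as a lasso path rather than a general losing play, which is equivalent here.
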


\section{\SGAMEone}\label{sec:toss-as-you-go}
In this section, we show that deciding whether the winning probability of $\Max$ for an \sgameone\ is at least a given threshold $\theta$, with any of the objectives—reachability, parity, or energy—is \PSpace-complete.

We state the main complexity results below;  the upper bound is proved in \Cref{subsec:in-pspace} and the lower bound in \Cref{subsec:pspace-hardness}, respectively. 

\begin{theorem}\label{thm:pspace-complete}
Checking whether player $\Max$ can win the \sgameone\ with a given reachability, parity, or energy objective with a probability at least 
\modify{a given rational threshold~$\theta$, encoded in binary,}
is \PSpace-complete.
\end{theorem}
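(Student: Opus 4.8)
The plan is to prove the two directions separately: membership in \PSpace{} via a space-efficient recursion over the explicit game promised earlier, and \PSpace-hardness via a reduction from a stochastic/quantified satisfiability problem. Only reachability hardness is needed, since (as already noted) reachability embeds into parity with two priorities and into energy with weights $0$ and $-1$, so the hardness propagates to all three objectives.

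For the upper bound I would work directly with the explicit stochastic game whose states are pairs $(v,\own)$, the current node together with the partial ownership committed so far (for energy, augmented by the current energy level, whose bit-size stays polynomial and which may be capped at $|V|\cdot W$ with $W$ the maximal absolute weight). Two structural facts drive everything: $\own$ only grows, so every play has at most $|V|$ coin-toss transitions; and between two coin tosses the token stays inside the already-assigned region, where play is an ordinary deterministic-transition two-player game whose only ``exits'' are the fresh, as-yet-unassigned nodes. I would then characterise the value by the recursion
\[
\mathrm{val}(v,\own)=\toss(v)\,\mathrm{val}\big(v,\own[v\mapsto\Max]\big)+\big(1-\toss(v)\big)\,\mathrm{val}\big(v,\own[v\mapsto\Min]\big)
\]
for a fresh $v\notin\mathrm{dom}(\own)$, while for $v\in\mathrm{dom}(\own)$ the value is that of the ordinary two-player game on the assigned region in which every fresh exit $u$ is a terminal with payoff $\mathrm{val}(u,\own)$ and infinite plays inside the region are scored by the objective. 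Each recursive call strictly enlarges $\mathrm{dom}(\own)$, so the recursion depth is at most $|V|$; moreover all values are rationals whose denominator divides $\prod_v q_v$ (writing $\toss(v)=p_v/q_v$) and hence have polynomial bit-size. A depth-first evaluation keeps only $O(|V|)$ stack frames, each storing at most $|V|$ child values and solving one ordinary game---in polynomial time for reachability and within \PSpace{} for parity and energy---so the whole computation runs in polynomial space, and comparing $\mathrm{val}(v_0,\emptyset)$ with $\theta$ decides the problem.

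For the lower bound I would reduce from a \PSpace-complete stochastic satisfiability problem (e.g.\ stochastic \textsc{Sat}, with an $\exists$/random prefix $Q_1x_1\cdots Q_nx_n\,\phi$ whose value is compared with a threshold). The engine is that, at toss value $\tfrac12$, a gadget node that is entered once and has two continuations of values $V_T,V_F$ contributes $\tfrac12\max(V_T,V_F)+\tfrac12\min(V_T,V_F)=\tfrac12(V_T+V_F)$, i.e.\ exactly an \emph{average} (a random step); adding a third successor that is a losing sink of value $0$ turns this into $\tfrac12\max(V_T,V_F)$ (an \emph{existential} step), while a winning sink of value $1$ turns it into $\tfrac12+\tfrac12\min(V_T,V_F)$ (a \emph{universal} step). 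Laying out one such gadget per quantifier, in prefix order, realises the entire quantifier block; because every operator is scaled by $\tfrac12$ and shifted by a fixed offset, the image of the $\{0,1\}$-valued formula is separated from the target by a gap of order $2^{-n}$, which the binary-encoded threshold $\theta$ can resolve exactly.

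The step I expect to be the main obstacle---and the point at which the defining feature of \sgameone{}s is indispensable---is keeping the reduction \emph{polynomial} while enforcing that each variable's assignment is \emph{consistent} throughout $\phi$. A naive tree of gadgets would duplicate the formula-checker for every partial assignment and blow up exponentially; instead I would let a single \emph{shared} verification sub-game evaluate $\phi$, with the adversary challenging a clause, the protagonist answering with a literal, and the literal's truth checked by \emph{revisiting} the corresponding variable gadget. Here the rule that ownership is fixed on first visit is precisely what stores a variable's value persistently in the game state and forces every later read to agree, so that no player can profit by answering inconsistently across clauses. The delicate part of the argument is to reconcile this storage mechanism (which relies on revisiting) with the clean once-traversed operator gadgets above, and to verify that the accumulated $2^{-n}$-scale offsets compose into a value lying on the correct side of $\theta$ exactly when the formula is true, uniformly for reachability and hence for parity and energy.
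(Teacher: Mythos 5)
Your membership argument is correct in outline but takes a genuinely different route from the paper. You evaluate the explicit game by a depth-first backward recursion over the DAG of partial ownership functions, keeping $O(|V|)$ stack frames of polynomially many polynomial-size rationals; the paper instead first proves structural properties of the explicit game by induction on the number of unassigned nodes (memoryless optimality, denominators dividing the product of the toss denominators, short paths to winning cycles) and then runs a \emph{forward} guess-and-challenge algorithm in $\mathbf{APTIME}=\mathbf{PSPACE}$. Your route is more elementary, but it leaves one step unjustified: between two coin tosses you must solve an ``ordinary two-player game whose exits carry rational payoffs,'' and for parity and energy this is a quantitative game whose solvability in polynomial space is not immediate. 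The paper closes exactly this step with its strong/weak $p$-game construction (replace every exit of value $\geq p$, respectively $>p$, by a winning sink and the rest by losing sinks, and observe by monotonicity that the value is always $0$, $1$, or one of the exit values), which reduces the quantitative subgame to polynomially many qualitative games. You would need this, or an equivalent argument, to make the recursion well-defined; as written it is a gap, though a repairable one.

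The hardness direction, however, is a plan rather than a proof, and the missing piece is precisely the one that carries all the difficulty. You correctly identify that first-visit ownership can serve as persistent storage of a variable's value and that a shared verification sub-game must read these values back by revisiting variable gadgets --- this is exactly the mechanism of the paper's reduction from QBF, where each clause node has edges only to the \emph{negations} of its literals, so that a satisfied clause offers $\Max$ a fresh (unassigned) literal node while a falsified clause offers only nodes already visited and already owned by $\Min$. But you do not construct the gadget, and you do not carry out the value analysis. In the paper this analysis is nontrivial: it requires the extra buffer nodes $y_i''$ and $y_i'$ on the existential branches, whose role is to make a not-yet-visited existential literal reachable from a clause node only after winning two further coin tosses, which is what separates the satisfiable case (value at least $p+\tfrac{1}{2}q$) from the unsatisfiable case (value at most $p+\tfrac{1}{4}q$), with $p$ and $q$ computed by an explicit recurrence and the threshold set to $\theta=p+\tfrac{1}{2}q$. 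Your proposal also has an internal tension you flag but do not resolve: the operator gadgets are analysed under the assumption that each is ``entered once,'' while the verification phase depends on revisiting them, so without the concrete construction one cannot check that the two phases do not corrupt each other's probability contributions. Until the gadget, the revisiting discipline, and the threshold gap are worked out explicitly, the lower bound is not established.
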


\subsection{Membership in PSPACE}\label{subsec:in-pspace}
To determine the winning probability of player~$\Max$ in {\sgameone}s, 
one can transform the game into an equivalent stochastic reachability, parity, or energy game, 
called the \emph{explicit game}, and then solve this explicit game directly.  
We formally define the explicit game as follows.

\paragraph{Explicit Game} %
Let $G = (V, E)$ be the game graph of an {\sgameone}, and let $\toss$ be the associated coin-toss function. 
\modify{We assume that $\toss$ assigns rational probabilities to vertices, each given by a finite binary encoding.} 
The \emph{explicit game} $\Gg$ is a stochastic game whose states are of the form $(G, \own, u)$, 
where $\own$ is an ownership function and $u \in V$ is the current token position.  
The initial state is $(G, \emptyset, v_0)$, where no ownership is assigned to any node and $v_0 \in V$ is the initial node in the original {\sgameone}.  

The transitions in the explicit game are defined as follows. 
At a game state $(G, \own, u)$,
\begin{itemize}
    \item 
    if $\own(u)$ is undefined, then the state is random. It transitions with probability $\toss(u)$ to $(G, \own', u)$ where $\own' = \own \cup \{u \mapsto \Max\}$, 
    and with probability $1 - \toss(u)$ to $(G, \own'', u)$ where $\own'' = \own \cup \{u \mapsto \Min\}$;
    \item 
    otherwise, $\own(u)$ is defined, $(G, \own', u)$ is owned by player $\own(u)$ and has successors $(G, \own, v)$ for all $(u, v) \in E$.
\end{itemize}

The first two levels of the explicit game constructed from the \sgameone{} on the game graph in \Cref{fig:explicit-game}(a) is shown in \Cref{fig:explicit-game}(c).
A play $\langle (G, \emptyset, v_0), (G, \own_1, v_1), (G, \own_2, v_2), \ldots \rangle$ 
is won by player~$\Max$ if the projected play on $V$, 
$\langle v_0, v_1, v_2, \ldots \rangle$, is won by~$\Max$.  
For example, for the \emph{reachability} objective, the game states $(G, \own, \top)$ are the target states.  
The probability that player~$\Max$ wins an {\sgameone} on~$G$ 
is thus equal to the probability that~$\Max$ wins the corresponding stochastic game on~$\Gg$.

Directly solving the explicit game—namely, a stochastic reachability, parity, or energy game—yields decidability.  
However, this approach is inefficient, as the explicit game contains $3^{|V|} \cdot |V|$ states:  
there are $3^{|V|}$ possible ownership functions and $|V|$ possible token positions.  
Nevertheless, we show that the winning probability for player~$\Max$ can be decided in alternating polynomial time ($\mathbf{APTIME}$), which is equivalent to \PSpace\ \cite{CKS81},
matching the {\PSpace}-hard lower bound established in \Cref{subsec:pspace-hardness}.

We first observe that although the explicit game is large, it has a simple structure and directly inherits memoryless optimal strategies from the deterministic versions of the reachability, parity, and energy games.  
Furthermore, when both players follow their respective memoryless optimal strategies, any path leading to a winning cycle has polynomial length.  
Moreover, the winning probabilities for $\Max$ and $\Min$ at each game state can be represented in polynomial size.  
These properties together allow the winning probability at the initial state of the explicit game to be decided in $\mathbf{APTIME}$.



\begin{restatable}{lemma}{lemTossAsYouGo}\label{lem:toss-as-you-go-properties}
Our {\sgameone}s have
(a) optimal memoryless strategies. Moreover,
(b) the probability of winning can be written as a fraction with a denominator, which is the product of the denominators of the coin tosses of all unassigned states.
If the $\Max$ (resp. $\Min$) player plays optimally they can
(c) enforce that the first cycle%
\footnote{Note that, for reachability, we assume that target states have only target states as successors.}
reached and closed is winning with at least (resp. at most) their optimal winning probability,
(d) has a length of at most $|V|$, and
(e) the path to closing the cycles is of length at most $(|V|+1)\cdot (|V|+2)/2$.
\end{restatable}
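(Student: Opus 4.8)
The plan is to prove the five structural properties (a)--(e) of \Cref{lem:toss-as-you-go-properties} by analysing the explicit game $\Gg$, exploiting the fact that between random (ownership-deciding) moves it behaves locally like a deterministic reachability/parity/energy game. The key observation driving everything is that ownership is \emph{monotone}: once a node's owner is fixed it never changes, so along any play the ownership function $\own$ only grows, and there are at most $|V|$ random transitions in total (one per node). This bounds the ``stochastic depth'' of the game and is what separates these games from general stochastic games.

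\textbf{Part (a).} For memoryless optimality I would argue that once \emph{all} ownership has been decided (the leaf subgames where $\own$ is total), the explicit game restricts to an ordinary deterministic reachability/parity/energy game on a fixed arena, which is memoryless determined by the classical results already cited (\cite{Ehrenfeucht1979,Gurvich88,BouyerFLMS08}). I would then push this up the finite ownership lattice by backward induction on the number of unassigned nodes: at a random state the value is a fixed convex combination of the two child values and requires no memory, and at a controlled state with fully-assigned successors the optimal choice depends only on the current state $(G,\own,u)$. Since the ownership component is part of the state, a strategy depending only on the current explicit-game state is memoryless there, and I would note this projects to the natural ``depends only on current node and revealed ownership'' strategy in the original \sgameone.

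\textbf{Parts (b)--(e).} For (b) I would again induct on the unassigned nodes: the value at a leaf subgame (ownership total) is $0$ or $1$, and each random node multiplies the denominator by exactly the denominator of its coin toss $\toss(v)$ while controlled nodes only select among already-formed fractions, so the global value is a fraction whose denominator is the product of $\toss$-denominators over unassigned states. For (c), fixing the memoryless optimal strategies from (a) collapses each fully-assigned leaf subgame to a deterministic play that eventually enters a simple cycle (a ``lasso''); optimality of the strategies guarantees that cycle is winning for the player achieving the value, giving the promised comparison with the optimal winning probability. Property (d) is immediate since a simple cycle in a graph on $|V|$ nodes has length at most $|V|$. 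For (e) I would bound the total path length by summing, over the at most $|V|$ ownership-revealing phases, the length of a simple loop-free path plus the closing cycle within each phase; a careful accounting where phase $i$ contributes at most $(|V|+1-i)$ fresh steps before a new node is revealed yields the triangular bound $(|V|+1)(|V|+2)/2$.

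\textbf{The hard part} will be (e): getting the exact triangular-number bound rather than a loose $O(|V|^2)$ requires pinning down precisely how long the play can wander before a \emph{new} (unassigned) node is visited and its ownership revealed, versus how long it takes to close a cycle once no new node will ever be seen again. The subtlety is that revisiting already-assigned nodes does not advance the ``stochastic depth,'' so I would need an argument that between consecutive reveals the play cannot repeat an already-assigned node without closing a cycle (otherwise the memoryless strategy would have closed it earlier), bounding each inter-reveal segment by the number of nodes assigned so far and making the total a sum $\sum_{k=1}^{|V|}(k+1)$. Verifying this for the energy objective, where the ordering of cycle traversal matters for the energy level, is the most delicate case and is where I would spend the most care.
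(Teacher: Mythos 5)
Your overall skeleton matches the paper's: induction on the number of unassigned nodes, with the fully-assigned copies as the base case (ordinary memoryless-determined two-player games), convex combinations at coin-toss states for (b), and a triangular accounting of per-phase path lengths for (e) (your sum $\sum_{k}(k+1)$ is essentially the paper's strengthened claim that the path length is at most $\sum_{i=0}^m(|V|-i+1)$ when $m$ nodes are unassigned). Parts (b) and (d) are fine as you sketch them.

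The genuine gap is in the induction step at \emph{assigned} nodes. You write that ``at a controlled state with fully-assigned successors the optimal choice depends only on the current state,'' which implicitly assumes a one-step backward induction. But within a fixed partial assignment the assigned nodes form an entire deterministic two-player subgame that may contain cycles, so there is no well-founded order in which to propagate values; a play in this subgame either closes a cycle among assigned nodes (payoff $0$ or $1$ depending on whether the cycle is winning) or exits to an unassigned node whose value is a fraction in $[0,1]$ known by the induction hypothesis. Determining optimal memoryless play in such a mixed quantitative game is exactly the non-trivial content of the lemma, and it is also where (c) and the per-phase step bounds must be established -- not only in the leaf subgames, as your part (c) suggests. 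The paper resolves this with a threshold construction: for each $p\in[0,1]$ it builds a \emph{strong} (resp.\ \emph{weak}) $p$-game on the assigned nodes by turning unassigned states with value $\geq p$ (resp.\ $>p$) into winning sinks and the rest into losing sinks, then uses memoryless determinacy of these ordinary games plus monotonicity in $p$ to extract the value, the memoryless optimal strategy, the winning-cycle guarantee, and the $|V|-m$ step bound all at once. Without this device (or an equivalent way of solving a deterministic game with real-valued exits and cycle payoffs), your backward induction does not go through, and your claim in (e) that ``the memoryless strategy would have closed it earlier'' presupposes the very strategy whose existence is at issue.
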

\begin{proof}
To analyse the {\sgameone}s, we observe that it has a natural DAG structure, as the set of unassigned states can only fall.
We analyse the game inductively from the bottom SCC upwards, starting with the $2^{|V|}$ copies where the ownership of all nodes are assigned (induction basis).
For the length of the path until a cycle is closed, we show the stronger claim this to be (f) $\sum_{i=0}^m(|V|-i+1)$ if there are $m$ unassigned nodes. Note that (f) entails (e).

For this \textbf{induction basis}, the game is an ordinary non-stochastic two player game;
they are memoryless determined with memoryless optimal strategies  \cite{Ehrenfeucht1979,Gurvich88,mostowski1991games,EmersonJ91,BouyerFLMS08} (a,c), which guarantee to reach and close a winning cycle with length at most $|V|$ (d) in $|V|+1$ steps (f) for the winning player. 
Obviously, such strategies are also optimal.

As this only allows for the winning probabilities $0$ and $1$, they are fractions with a denominator being the empty product, showing (b).

For the \textbf{induction step}, we assume that the properties have been shown for up to $m$ unassigned states and move to $m+1$ unassigned states.

From an unassigned node $s$, the next step is a coin toss that assigns ownership.
The probability of winning is $\toss(s)$ times the probability of winning when starting from $s$ in the explicit game state that only differs from the current state in that $s$ is assigned to $\Max$ ($\Max$ winning the coin toss), plus $1-\toss(s)$ times the probability of winning when starting from $s$ in the explicit game state that only differs from the current state in that $s$ is assigned to $\Min$ ($\Min$ winning the coin toss).

By induction hypothesis, the probability can be written as a fraction with a denominator, which is the product of the denominators of the coin tosses of all unassigned states (b).

We also inherit that optimal memoryless policies of $\Max$ (resp. $\Min$) guarantee that the first cycle reached is winning with at least (at most) this probability (a).
The length of a path until closing this cycle is just one longer than the path from either successor state (c,d,f).

For states that are assigned an owner, we focus on the $|V|$ game states with a fixed partial assignment. For them, we design a game we call a strong (resp. weak) $p$-game for any $p\in [0,1]$ by turning all unassigned states with probability of winning $\geq p$ (resp. $>p$) into winning%
\footnote{For reachability games a winning (losing) sink is a final (non-final) state with only a self-loop, for parity games these sink states get an even (odd) priority, for energy games a positive (negative) weight.}
sinks, and those with probability of winning $<p$ (resp. $\leq p$) into losing sinks.

$\Max$ (resp. $\Min$) wins with probability at least $p$ (resp. $1-p$) if they win the strong (resp. weak) $p$-game by following their memoryless strategy in the strong (resp. weak) $p$-game on the game with the fixed partial assignment, and the optimal strategy that exists by induction hypothesis after leaving it. This is because, when following this strategy, either a winning cycle is closed within this area (which is then of length $\leq |V|-m$ is reached and closed in $\leq |V|-m+1$ steps, or a state from which there is an assignment is made by coin toss is reached within $|V|-m$ steps. From this state, $\Max$ (resp. $\Min$) wins with a probability of at least $p$ (resp. $1-p$) (see above), providing (d,f).

By their construction, for every state there is clear monotonicity in $p$-games, $\Max$ wins the strong $1$-game, $\Min$ wins the weak $0$-game, or there is a $p$ such that $\Max$ wins the strong and $\Min$ the weak $p$-game; in each of the three cases, this probability is the one each player can guarantee (a,c). In the latter case, this $p$ is a probability that arises for one of the unassigned states, providing (b) as shown above.

This completes the inductive argument and the proof.
\qed
\end{proof}

With \Cref{lem:toss-as-you-go-properties} at hand, we obtain an algorithm for deciding whether the winning probability meets a given threshold in $\mathbf{APTIME} = \mathbf{PSPACE}$.

\begin{theorem}\label{thm:pspace-toss-as-you-go}
Checking whether player $\Max$ can win the {\sgameone}\ with a given reachability, parity, or energy objective with a probability at least \modify{a given rational threshold~$\theta$, encoded in binary,} can be done in alternating polynomial time. 
\end{theorem}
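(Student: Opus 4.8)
The plan is to simulate the explicit game $\Gg$ on the fly with an alternating Turing machine and to invoke $\mathbf{APTIME} = \PSpace$~\cite{CKS81}. I would never materialise the exponentially large $\Gg$; instead a configuration of the machine would consist of a current explicit-game state $(G,\own,u)$ together with a rational threshold $\theta'$, and the machine would be designed to accept exactly when the value of $(G,\own,u)$ is at least $\theta'$. Since by \Cref{lem:toss-as-you-go-properties}(a) the game is memoryless determined, it suffices to branch \emph{existentially} at a state owned by $\Max$ (the value is the maximum over successors, so it is $\ge\theta'$ iff \emph{some} successor has value $\ge\theta'$) and \emph{universally} at a state owned by $\Min$ (the value is the minimum, so it is $\ge\theta'$ iff \emph{every} successor has value $\ge\theta'$), passing $\theta'$ unchanged to the chosen child in both cases.

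The only non-routine step is an unassigned (random) state $u$, where the value equals $\toss(u)\,a+(1-\toss(u))\,b$ with $a$ the value of the $\Max$-assigned child and $b$ that of the $\Min$-assigned child. Here the machine would first \emph{guess} a rational $\beta$ and then \emph{universally} launch two recursive instances of the very same threshold-decision procedure: one checking that the $\Min$-assigned child has value $\ge\beta$, and one checking that the $\Max$-assigned child has value $\ge \theta_a := \big(\theta'-(1-\toss(u))\,\beta\big)/\toss(u)$. I expect this to be correct using only lower-bound checks: if $\toss(u)\,a+(1-\toss(u))\,b\ge\theta'$, then the guess $\beta=b$ passes both checks, since $b\ge b$ and $\toss(u)\,a\ge\theta'-(1-\toss(u))\,b$; conversely, if both checks pass for some $\beta$, then $b\ge\beta$ and $\toss(u)\,a\ge\theta'-(1-\toss(u))\,\beta$, whence $\toss(u)\,a+(1-\toss(u))\,b\ge\toss(u)\,a+(1-\toss(u))\,\beta\ge\theta'$ using $b\ge\beta$ and $1-\toss(u)\ge0$. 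This makes the whole algorithm a single recursive procedure that uniformly handles $\Max$, $\Min$, and random states.

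It then remains to bound each computation branch by polynomial time. By \Cref{lem:toss-as-you-go-properties}(c,d,e), under optimal memoryless play the first cycle reached and closed is winning with the optimal probability and is closed within a path of length at most $(|V|+1)(|V|+2)/2$; once every node met along the current branch is assigned, the continuation is a deterministic reachability, parity, or energy game whose winner, and hence value $0$ or $1$, is exposed by this first closed cycle and read off by the objective-specific test (target reached for reachability, maximal priority on the cycle for parity, and the cycle weight together with the running energy level for energy). The machine would therefore unfold each branch only to this polynomial depth and compare the resulting $0/1$ value against the threshold carried to that leaf. Finally, by \Cref{lem:toss-as-you-go-properties}(b) every value occurring in $\Gg$ is a fraction whose denominator divides the product of the at most $|V|$ coin-toss denominators, so every value has polynomial bit-size; I would restrict the guessed $\beta$ to such fractions, keeping the adjusted thresholds $\theta_a$ of polynomial bit-size along a branch. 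Hence each step, and thus each branch, runs in polynomial time, placing the problem in $\mathbf{APTIME} = \PSpace$.

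The hard part will be the random-state step: one must reduce the weighted-average comparison to ordinary threshold games without losing precision, which is precisely what the guess-and-split argument above is intended to achieve, and one must ensure that the accumulated thresholds and guessed values retain polynomial size, which should follow from the denominator bound in \Cref{lem:toss-as-you-go-properties}(b). Verifying the bounded-length cycle detection together with the correct $0/1$ read-off for each of the three objectives is the remaining routine but necessary bookkeeping.
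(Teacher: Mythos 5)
Your proposal is correct and follows essentially the same route as the paper's proof: simulate the explicit game on the fly with an alternating machine, guess intermediate winning probabilities at coin-toss nodes and let the opponent challenge them, and use \Cref{lem:toss-as-you-go-properties} both to cut every branch off after a polynomially bounded path closes a cycle and to keep all guessed fractions of polynomial bit-size. The only difference is cosmetic — at a random node you guess one child value and propagate a derived threshold to the other child, whereas the paper guesses both child values and adds a universal branch that locally verifies the weighted-sum identity — and your one-sided lower-bound argument for why this suffices is sound.
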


\begin{proof}
We can simply play the explicit game forwards with an alternating machine, which first guesses the probability $p$ of winning, given by guessing the numerator and implicitly using the implicit denominator as the product of all denominators of the probability of all unassigned states (initially all states) given by $\mathtt{t}$.
We then check that $p \geq \theta$ holds and set the current node $v$ to $v_0$. 

We then repeat the following loop:
\begin{enumerate}
    \item  Write the current node on a string.
    \item  For states assigned to $\Max$ (resp.\ $\Min$), we existentially (resp.\ universally) choose a successor, update the current node $v$ to this successor.
    \item For unassigned nodes, existentially guess the numerator\footnote{When representing fractions by the numerator only, using the product of all denominators of unassigned nodes, then we can check $p \geq p_+\cdot n + p_- \cdot (d-n)$, where $d/n$ is the probability of Max winning the coin toss.}
    of the probability of winning for the case that $\Max$ wins the coin toss ($p_+$) and the chance of winning when $\Min$ wins the coin toss ($p_-$).
    We then universally choose whether to check that the weighted sum of these probabilities is at least the current probability of winning, or to universally make one of these assignments and update the probability accordingly to $p_+$ resp.\ $p_-$.
    \item We then offer both players in turn the option to claim that they have closed a winning cycle. 
    A player claiming this wins if they can and loses if they cannot show this. (Note that checking for having closed a winning cycle is an easy deterministic check.)
\end{enumerate}
If the actual probability is $p \geq \theta$, $\Max$ can always guess the correct probability and follow their optimal strategy.
If the actual probability is $p < \theta$, player $\Min$ can always follow their optimal strategy and either challenge in case there if there is a local inconsistency and otherwise challenge $p_+$ if $p_+$ is higher than the actual probability, and challenge $p_-$ otherwise.
\qed
\end{proof}

\subsection{PSPACE-hardness}\label{subsec:pspace-hardness}

To prove \PSpace-hardness of computing the winning probability in the \sgameone, we reduce from the satisfiability problem for quantified Boolean formulas (QBF) in prenex conjunctive normal form (PCNF).

For simplicity, we assume the quantifier prefix of the QBF is strictly alternating, so that the input has the form
\[
  \Phi \;=\; \forall x_1 \exists y_1 \; \forall x_2 \exists y_2 \; \dots \; \forall x_n \exists y_n \;\; \varphi,
\]
where the matrix $\varphi = C_1 \land C_2 \land \dots \land C_m$ is in CNF, and only contains variables occurring in the quantifier prefix.
It is well known that deciding satisfiability of such QBFs is \PSpace-complete~\cite{Papadimitriou1994}.
Moreover, their semantics can be understood in terms of the following (evaluation) game played by a universal and an existential player. The players assign variables following the order in the quantifier prefix, with the universal player assigning universally quantified variables $x_i$, and the existential player assigning existentially quantified variables $y_i$. The existential player wins if the resulting truth assignment satisfies the matrix, and the universal player wins if the assignment falsifies the matrix. A QBF is satisfiable (true) if the existential player has a winning strategy for the evaluation game, and unsatisfiable (false) if the universal player has a winning strategy.

To reduce from QBF satisfiability, we will construct a game graph $G_\Phi$ and show that the probability of the $\Max$ player winning is at least $\theta$ if $\Phi$ is satisfiable, and strictly less than $\theta$ if $\Phi$ is unsatisfiable, for a threshold value $\theta$ only depending on $n$.

\paragraph{Game graph construction.}
Given $\Phi$, we construct a directed game graph $G_\Phi$ with nodes as follows:  
for each \emph{universally quantified} variable $x_i$, nodes $\forall x_i$, $x_i$, $\neg x_i$;  
for each \emph{existentially quantified} variable $y_i$, nodes $\exists y_i$, $y_i$, $y_i'$, $\neg y_i$, $\neg y_i'$, $y_i''$, $\neg y_i''$;  
for each \emph{clause} $C_j$ of $\varphi$, a node $C_j$;  
one conjunction node $\land$;  
and two \emph{sink} nodes $\top$ and $\bot$.
The initial node is $\forall x_1$, and the target node is $\top$. 
Edges are added as follows:

\begin{description}
  \item[\textbf{Universal choice.}] From $\forall x_i$, add edges to $x_i$ and $\neg x_i$.
  \item[\textbf{Existential choice.}] From $\exists y_i$, add edges to $y_i''$ and $\neg y_i''$.
    From $y_i''$ add an edge to $y_i'$, and from $\neg y_i''$ an edge to $\neg y_i'$.
    Further, add edges from $y_i'$ to $y_i$ and from $\neg y_i'$ to $\neg y_i$.
  \item[\textbf{Quantifier progression.}] From $x_i$ and $\neg x_i$, add edges to $\exists y_i$.
    For $1 \le i < n$, add edges from $y_i$ and $\neg y_i$ to $\forall x_{i+1}$.
    For $i=n$, add edges from $y_n$ and $\neg y_n$ to $\land$.
  \item[\textbf{Clause choice.}] From $\land$, add an edge to each clause $C_j$.
  \item[\textbf{Literal choice.}] A literal is a variable or its negation.
    For each clause $C_j$ and each literal $\ell \in \{x_i,\neg x_i,y_i,\neg y_i\}$,
    if $\overline{\ell}\in C_j$ add an edge from $C_j$ to $\ell$ (so each clause connects to the \emph{negations} of its literals).
  \item[\textbf{Sink connections.}] Add edges to $\top$ from literal nodes $x_i,\neg x_i,y_i,\neg y_i$,
    from every $\forall x_i$, and from $\land$. Add edges to $\bot$ from each $\exists y_i$,
    the nodes $y_i',\neg y_i',y_i'',\neg y_i''$, and each $C_j$. Finally, add self-loops on $\top$ and $\bot$.
\end{description}

\begin{figure}[t]
\tikzset{
    state/.style={circle, draw, minimum size=8mm, inner sep=0pt},
    graystate/.style={state, fill=gray!40},
    every edge/.style={draw, ->, >=stealth}
}
    \centering
    \scalebox{0.9}{
    \begin{adjustbox}{trim=0pt 30pt 0pt 0pt, clip}
    \begin{tikzpicture}[node distance=.8cm and .25cm]
    
    \node[graystate,initial,initial text=] (Ax1) {$\forall x_1$};
    \node[graystate, above right=of Ax1] (x1) {$x_1$};
    \node[graystate, below right=of Ax1] (notx1) {$\neg x_1$};
    \node[state, right=1cm of Ax1] (Ey1) {$\exists y_1$};
    
    \node[state, above right=of Ey1] (y1pp) {$y''_1$};
    \node[state, right=of y1pp] (y1p) {$y'_1$};
    \node[graystate, right=of y1p] (y1) {$y_1$};
    
    \node[state, below right=of Ey1] (noty1pp) {$\neg y''_1$};
    \node[state, right=of noty1pp] (noty1p) {$\neg y'_1$};
    \node[graystate, right=of noty1p] (noty1) {$\neg y_1$};
    
    \node[graystate, below right=of y1] (Ax2) {$\forall x_2$};
    \node[graystate, above right=of Ax2] (x2) {$x_2$};
    \node[graystate, below right=of Ax2] (notx2) {$\neg x_2$};
    \node[state, right=1cm of Ax2] (Ey2) {$\exists y_2$};
    
    \node[state, above right=of Ey2] (y2pp) {$y''_2$};
    \node[state, right=of y2pp] (y2p) {$y'_2$};
    \node[graystate, right=of y2p] (y2) {$y_2$};
    
    \node[state, below right=of Ey2] (noty2pp) {$\neg y''_2$};
    \node[state, right=of noty2pp] (noty2p) {$\neg y'_2$};
    \node[graystate, right=of noty2p] (noty2) {$\neg y_2$};
    
    \node[graystate, below right=1cm of y2] (and) {$\land$};
    
    \node[state, below=1.1cm of noty1pp] (C1) {$C_1$};
    \node[state, right=2.2cm of C1] (C2) {$C_2$};
    \node[state, right=2.2cm of C2] (C3) {$C_3$};
    
    \path (Ax1) edge (x1);
    \path (Ax1) edge (notx1);
    
    \path (x1) edge (Ey1);
    \path (notx1) edge (Ey1);
    
    \path (Ey1) edge (y1pp);
    \path (y1pp) edge (y1p);
    \path (y1p) edge (y1);
    
    \path (Ey1) edge (noty1pp);
    \path (noty1pp) edge (noty1p);
    \path (noty1p) edge (noty1);
    
    \path (y1) edge (Ax2);
    \path (noty1) edge (Ax2);
    \path (Ax2) edge (x2);
    \path (Ax2) edge (notx2);
    
    \path (x2) edge (Ey2);
    \path (notx2) edge (Ey2);
    
    \path (Ey2) edge (y2pp);
    \path (y2pp) edge (y2p);
    \path (y2p) edge (y2);
    
    \path (Ey2) edge (noty2pp);
    \path (noty2pp) edge (noty2p);
    \path (noty2p) edge (noty2);
    
    \path (y2) edge (and);
    \path (noty2) edge (and);
    
    \path (and) edge [out=-90, in=-45] (C1);
    
    \path (and) edge [out=-90, in=-35] (C2);
    
    
    \path[->] (and) edge [bend left=40] node [right, near start,xshift=-1em,yshift=-1.2em]{} (C3);

    \path (C1) edge (notx1);
    \path (C1) edge (noty1);
    \path (C1) edge [bend right=17] (y2);
    
    \path (C2) edge (x2);
    \path (C2) edge (noty1);
    
    \path (C3) edge (notx2);
    \path (C3) edge (noty2);
    \end{tikzpicture}
    \end{adjustbox}
    }
\caption{Simplified game graph for the QBF in \Cref{example:qbfgraph}.
Shaded nodes have transitions to the target node $\top$, while all other nodes have transitions to $\bot$.}
\label{fig:qbfgraph}
\end{figure}
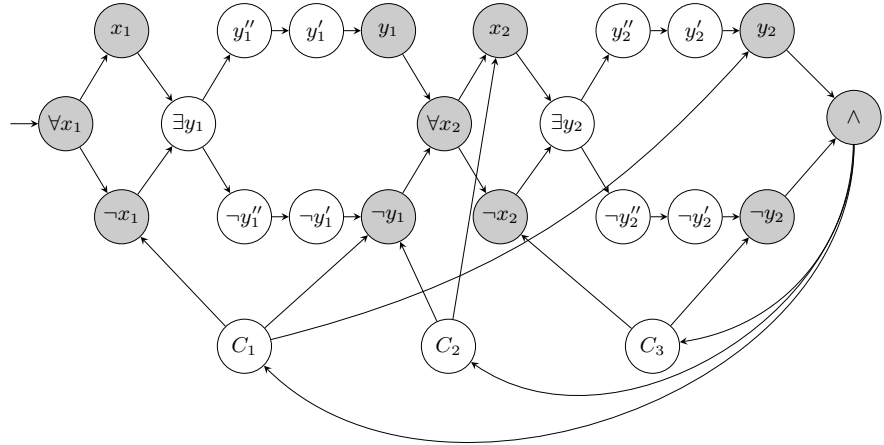
\begin{example}\label{example:qbfgraph}


Consider the QBF
\[
  \Phi \;=\; \forall x_1 \exists y_1 \forall x_2 \exists y_2 \; \varphi
\]
with matrix
\[
  \varphi \;=\; \underbrace{(x_1 \lor y_1 \lor \neg y_2 )}_{C_1} \;\land\; \underbrace{(\neg x_2 \lor y_1)}_{C_2} \;\land\; \underbrace{(x_2 \lor  y_2)}_{C_3}.
\]
\Cref{fig:qbfgraph} shows a simplified drawing of its game graph~$G_\Phi$ where the sinks and their incident edges are omitted. 
\qed
\end{example}

Consider the \sgameone\ with a reachability objective played on $G_\Phi$ where each coin toss is of probability $\frac{1}{2}$, that is, each node is assigned to player $\Max$ or $\Min$ with equal chance.
If a node with an edge to $\top$ is assigned to the $\Max$ player, $\Max$ can win immediately by following that edge. Similarly, we can assume that the $\Min$ player always transitions to $\bot$ when given the chance.
Conversely, it is clearly never in the interest of the $\Max$ player to go to $\bot$, or for the $\Min$ player to go to $\top$, so we can assume they never choose such a transition if any other transition is available.

The game only gets interesting once the token arrives at a clause node that is assigned to the $\Max$ player.
That only happens if, up to that point, any visited node with a $\top$-transition is assigned to $\Min$, and any visited node with a $\bot$-transition is assigned to $\Max$.
We now argue that the probability of this event is independent of players' choices up to that point (assuming they play rationally) and only depends on the number of variables in the quantifier prefix.

\begin{itemize}
    \item Let $p_i$ denote the probability of the $\Max$ player winning before the token reaches quantifier node $\forall x_i$ for $1 \leq i \leq n$, and let $p_{n+1}$ be the probability of them winning before the token gets to the $\land$-node.
    \item Further, let $q_i$ be the probability of the token arriving at node $\forall x_i$ for $1 \leq i \leq n$, and let $q_{n+1}$ be the probability of it reaching  the $\land$-node.
    \item Finally, let $q$ be the probability of the token reaching a clause node $C_j$ and the $\Max$ player gaining control, and let $p$ be the probability of the $\Max$ player winning before the token gets to a clause node $C_j$.
\end{itemize}
\begin{lemma}\label{lem:fixedprobabilities}
Given $n$, the probabilities defined above can be computed as follows:
\[
\begin{aligned}
q_1 &= 1, & q_{i+1} &= 2^{-6} q_i, & q &= 2^{-2} q_{n+1},\\
p_1 &= 0, & p_{i+1} &= p_i + q_i (2^{-1} + 2^{-2} + 2^{-6}), & p &= p_{n+1} + 2^{-1} q_{n+1}.
\end{aligned}
\]
\end{lemma}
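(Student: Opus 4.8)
The plan is to trace the unique \emph{spine} of nodes along which the token can travel under rational play, and to read off the stated recurrences by a segment-by-segment count of fair coin tosses. Beyond this bookkeeping, the only genuine content is to show that the four families of quantities $p_i, q_i, p, q$ are \emph{well defined}, i.e.\ independent of the players' strategic choices. First I would fix the rational-play conventions already isolated before the lemma: a $\Max$-owned node with an edge to $\top$ is won immediately by $\Max$, a $\Min$-owned node with an edge to $\bot$ is won immediately by $\Min$, and neither player ever voluntarily moves to the sink favouring the opponent. Under these conventions the token can progress deeper into $G_\Phi$ only if every visited node with a $\top$-edge has been assigned to $\Min$ and every visited node with a $\bot$-edge has been assigned to $\Max$; I would record, for each node type on the spine, which owner is thus \emph{forced} for the play to continue.

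The crux is the independence claim. On any continuing play the spine alternates between $\top$-nodes ($\forall x_i$, the chosen literal among $x_i, \neg x_i$, the chosen value literal among $y_i, \neg y_i$, and finally $\land$) and $\bot$-nodes ($\exists y_i$, the chosen node among $y_i'', \neg y_i''$, the node $y_i'$ or $\neg y_i'$, and finally a clause node $C_j$). There are exactly three genuine branch points per variable block: the $\Min$-choice of $x_i$ versus $\neg x_i$ at $\forall x_i$, the $\Max$-choice of $y_i''$ versus $\neg y_i''$ at $\exists y_i$, and later the $\Min$-choice of clause at $\land$; every other move (e.g.\ $y_i'' \to y_i' \to y_i$) is forced. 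I would argue that both options at each branch point lead to structurally isomorphic subgames with the same sequence of node types, hence the same coin-toss profile, so that the probability of reaching any spine node and the probability of $\Max$ winning before reaching it are unaffected by which rational strategy is played. Making this precise via an isomorphism of subgames establishes that $p_i, q_i, p, q$ are well defined.

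With well-definedness in hand, the recurrences follow by a direct computation on one segment from $\forall x_i$ to $\forall x_{i+1}$ (the segment from $\forall x_n$ to $\land$ is identical and yields the clauses for $q_{n+1}, p_{n+1}$). The segment comprises six fair tosses, in order: $\forall x_i$ (needs $\Min$), the literal (needs $\Min$), $\exists y_i$ (needs $\Max$), $y_i''$ (needs $\Max$), $y_i'$ (needs $\Max$), the value literal (needs $\Min$); continuing past all six has probability $2^{-6}$, giving $q_{i+1} = 2^{-6} q_i$. For $p_{i+1}$, $\Max$ wins on the segment exactly at a $\top$-node it happens to own before the play is cut short: with conditional probability $2^{-1}$ at $\forall x_i$, $2^{-2}$ at the first literal, and $2^{-6}$ at the value literal (reached only after the first five tosses go the continuing way), while the intervening $\bot$-nodes only ever produce $\Min$-wins; summing yields $p_{i+1} = p_i + q_i(2^{-1} + 2^{-2} + 2^{-6})$. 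The final segment from $\land$ uses two tosses, $\land$ (needs $\Min$) and $C_j$ (needs $\Max$), giving $q = 2^{-2} q_{n+1}$ and, since $\Max$ wins at $\land$ with conditional probability $2^{-1}$, $p = p_{n+1} + 2^{-1} q_{n+1}$. Together with the base cases $q_1 = 1$ and $p_1 = 0$ (the token starts at $\forall x_1$), this establishes the lemma. I expect the main obstacle to be the independence step: one must argue carefully that the three strategic branchings are probabilistically neutral and that no rational deviation alters the coin-toss profile before a clause node is reached; the counting itself is routine once this symmetry is in place.
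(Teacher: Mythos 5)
Your proposal is correct and follows essentially the same route as the paper: a segment-by-segment count of the six forced coin tosses per variable block (three $\top$-adjacent nodes needing $\Min$, three $\bot$-adjacent nodes needing $\Max$) and the two tosses at $\land$ and $C_j$, yielding exactly the stated recurrences. The only difference is that you explicitly isolate and argue the well-definedness of $p_i, q_i, p, q$ under the players' branch choices, a point the paper asserts informally before the lemma rather than proving inside it; your symmetry/isomorphism argument is a reasonable way to make that step rigorous.
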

\begin{proof}
Since $\forall x_1$ is the initial state, $q_1 = 1$ and $p_1 = 0$.
Reaching $\forall x_{i+1}$ for $1 \leq i < n$ requires reaching $\forall x_i$ with probability $q_i$, assigning $\forall x_i$, $(\neg) x_i$ and $(\neg) y_i$ to $\Min$, as well as assigning $\exists y_i$, $(\neg) y_i''$ and $(\neg) y_i'$ to $\Max$, so the probability is $q_{i+1} = 2^{-6} q_i$.
By the same argument, $q_{n+1} = 2^{-6} q_n$.
From node $\forall x_i$ for $1 \leq i \leq n$, the $\Max$ player can win before the token arrives at $\forall x_{i+1}$ (for $1 \leq i < n$) or $\land$ (for $i = n$) by having one of the nodes with edges to $\top$ assigned to them, which happens with probability $2^{-1} + 2^{-2} + 2^{-6}$.
Thus, $p_{i+1} = p_i + q_i (2^{-1} + 2^{-2} + 2^{-6})$.
From the $\land$-node, the $\Max$ player wins if that node is assigned to them, so $p = p_{n+1} + 2^{-1} q_{n+1}$, and the token is passed on to a clause node with $\Max$ in control with probability $\frac{1}{4}$, so $q = 2^{-2} q_{n+1}$.\qed
\end{proof}
The above lemma allows us to compute the probability $p$ of the $\Max$ player winning the game before gaining control of a clause node.
Next, we will show that they can win with probability $\frac{1}{2}$ from a clause node if they follow an existential winning strategy in the evaluation game for the QBF~$\Phi$.

\begin{lemma}\label{lem:existentialstrategy}
If the existential player has a winning strategy in the evaluation game for $\Phi$, then the $\Max$ player can win the \sgameone\ with a reachability objective on $G_\Phi$ with probability at least $p + \frac{1}{2}q$.
\end{lemma}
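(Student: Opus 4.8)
The plan is to exhibit an explicit $\Max$ strategy achieving winning probability at least $p+\tfrac12 q$, and to obtain the bound by splitting every play into two temporally disjoint phases: the \emph{quantifier phase}, during which the token traverses the progression gadget up to the $\land$-node, and the \emph{clause phase}, which begins once the token moves from the $\land$-node into some clause node $C_j$. First I would fix $\Max$'s behaviour. Whenever $\Max$ owns a node with an edge to $\top$, it moves to $\top$ and wins immediately; $\Max$ never moves to $\bot$; at an existential node $\exists y_i$ it chooses between $y_i''$ and $\neg y_i''$ according to a fixed winning strategy of the existential player in the evaluation game of $\Phi$, reading the already-chosen values of $x_1,\dots,x_i$ off the current ownership function (the visited node among $x_j,\neg x_j$ pins down the value of $x_j$); and from a clause node $C_j$ that it owns, $\Max$ moves to $\overline{m}$ for some literal $m\in C_j$ that is true under the truth assignment fixed so far.

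\textbf{Quantifier phase.} By \Cref{lem:fixedprobabilities}, under rational play the probability that $\Max$ wins already during the quantifier phase equals $p$, and the probability that the token enters the clause phase with $\Max$ controlling the clause node equals $q$; these two events are disjoint. Both quantities are insensitive to the concrete branch taken at universal and existential nodes, since $x_i/\neg x_i$ and $y_i''/\neg y_i''$ are structurally symmetric, and insensitive to $\Min$'s choices as well: $\Min$ declining to divert to $\bot$ at a node it controls can only help $\Max$, and the particular clause routed to at the $\land$-node does not change the probability $q$ that this clause is then handed to $\Max$. Hence it suffices to lower-bound $\Max$'s winning probability conditioned on entering the clause phase in control.

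\textbf{Clause phase (the crux).} When the token reaches a clause node $C_j$ owned by $\Max$, the traversal of the quantifier gadget has fixed a complete assignment $\alpha$: the universal player ($\Min$) has selected each $x_i$ and the existential player ($\Max$), following its winning strategy, has selected each $y_i$, so $\alpha$ satisfies $\varphi$. In particular $C_j$ contains a literal $m$ true under $\alpha$. By construction $C_j$ has an edge to $\overline{m}$, and the literal node $\overline{m}$, being false under $\alpha$, was never visited during the quantifier phase and is therefore still \emph{unassigned}. $\Max$ moves to $\overline{m}$; the first-visit coin toss then assigns $\overline{m}$ to $\Max$ with probability $\toss(\overline m)=\tfrac12$, whereupon $\Max$ follows $\overline{m}\to\top$ and wins. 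Thus, conditioned on entering the clause phase in control, $\Max$ wins with probability at least $\tfrac12$, independently of which clause $\Min$ selected.

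\textbf{Combination and main obstacle.} Combining the phases, $\Max$'s overall winning probability is at least (win in the quantifier phase) plus (enter the clause phase in control) times (win from there), i.e.\ at least $p+q\cdot\tfrac12=p+\tfrac12 q$, as claimed. The main obstacle I anticipate is the clause-phase bookkeeping: one must verify that $\overline{m}$ is genuinely unassigned at that moment, which hinges on the quantifier gadget being acyclic and visiting exactly the literal nodes agreeing with $\alpha$, and that no outcome is double-counted across the two phases. A secondary point is to confirm that $\Min$ has no profitable deviation pushing $\Max$ below $p+\tfrac12 q$; this follows from the symmetry of the branching gadgets and from the fact that diverting to $\bot$ is $\Min$'s only way to close out a partial play in its favour, so every deviation only helps $\Max$.
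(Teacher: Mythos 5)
Your proposal is correct and follows essentially the same route as the paper's proof: mimic the existential winning strategy through the quantifier gadget, invoke the probabilities $p$ and $q$ from \Cref{lem:fixedprobabilities}, and from a controlled clause node move to the negation $\overline{\ell}$ of a satisfied literal, which is unvisited and hence won with probability $\tfrac12$. Your additional bookkeeping (verifying that $\overline{\ell}$ is genuinely unassigned and that $\Min$ has no profitable deviation) makes explicit points the paper leaves implicit, but does not change the argument.
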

\begin{proof}
Interpreting a transition from $\forall v$ to $v$ as setting variable $v$ to true, and a transition to $\neg v$ as setting $v$ to false, before reaching a clause node, the $\Max$ player mimics the existential winning strategy for $\Phi$.
Assuming rational moves by the $\Min$ player, the $\Max$ player wins with probability $p$ before making it to a clause node $C_j$.
With probability $q$, they end up in a clause node $C_j$ that they control.
Since they mimicked an existential winning strategy, there is a literal $\ell \in C_j$ that is satisfied by the assignment induced by the players' moves. 
And so by construction of $G_\Phi$, there is an edge from $C_j$ to the negated literal node $\overline{\ell}$ that has not been visited before. The $\Max$ player moves to $\overline{\ell}$, and wins with probability $\frac{1}{2}$ by gaining control of this node and moving to $\top$.
Overall, the probability of the $\Max$ player winning is at least $p + \frac{1}{2}q$.\qed
\end{proof}
On the other hand, if the universal player has a winning strategy in the evaluation game, the $\Min$ player can mimic this strategy and force the $\Max$ player to visit a previously seen node, lowering their chances of reaching $\top$.
\begin{lemma}\label{lem:universalstrategy}
    If the universal player has a winning strategy in the evaluation game for $\Phi$, then the $\Max$ player can win the \sgameone\ with a reachability objective on $G_\Phi$ with probability at most $p + \frac{1}{4}q$.
\end{lemma}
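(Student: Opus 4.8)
The plan is to establish the mirror image of \Cref{lem:existentialstrategy}: we fix the $\Min$ player's strategy to \emph{mimic a universal winning strategy} for the evaluation game and show that, against it, no $\Max$ strategy can win with conditional probability exceeding $\frac14$ once control of a clause node is reached. As in the previous lemma, we work inside the ``interesting case'' of \Cref{lem:fixedprobabilities}: with probability $p$ the $\Max$ player wins before any clause node is reached, and this quantity is fixed independently of the strategies; with probability $q$ the token arrives at a $\Max$-controlled clause node with all previously visited nodes already owned (the shaded, $\top$-adjacent nodes by $\Min$, and the $\bot$-adjacent nodes by $\Max$). It therefore suffices to bound, conditioned on this event, the probability that $\Max$ subsequently reaches $\top$ by $\frac14$, which yields the overall bound $p+\frac14 q$.

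First I would record that, in the interesting case, the $\forall x_i$ and $\land$ nodes are $\Min$-owned, so $\Min$ genuinely controls the universal choices and the clause selection. Having $\Min$ play the universal winning strategy at the $\forall x_i$ nodes guarantees that, whatever existential choices $\Max$ makes at the $\Max$-owned $\exists y_i$ nodes, the induced truth assignment falsifies $\varphi$. Hence $\Min$ can route the token from $\land$ to a \emph{false} clause $C_{j_0}$. The decisive structural observation is then that, because $C_{j_0}$ is false, every literal of $C_{j_0}$ is false, so each negated-literal successor of $C_{j_0}$ is a \emph{true} literal node and was therefore already visited along the assignment path---and is owned by $\Min$. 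This is exactly the opposite of the existential case, where the satisfied literal yields a \emph{fresh} negated-literal node with a direct edge to $\top$, worth probability $\frac12$; here $\Max$ has no fresh literal node with a direct $\top$-transition available and is forced to move into an already-$\Min$-owned node.

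It remains to bound $\Max$'s winning probability from this configuration. I would argue that $\Min$ can always continue the play through already-visited nodes, avoiding $\top$ and never entering a fresh universal-literal node or a fresh clause node (both of which would hand $\Max$ a new coin toss adjacent to $\top$). The only remaining source of progress for $\Max$ is to take, at a visited $\exists y_i$ node, the as-yet-unused branch into the fresh chain $\neg y_i'' \to \neg y_i' \to \neg y_i$ (or its positive counterpart); this is precisely the role of the doubled $y_i'',y_i'$ gadget. Reaching $\top$ this way requires all three fresh nodes to be assigned to $\Max$---probability $\frac18$---while with probability $\frac18$ the first two fall to $\Max$ but $\neg y_i$ to $\Min$, letting $\Min$ route onward to the next existential gadget; in every other case ($\neg y_i''$ or $\neg y_i'$ owned by $\Min$) the play goes to $\bot$. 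Since each existential branch can be used at most once, the total winning probability satisfies the recurrence $W \le \frac18 + \frac18 W$, whose value is below $\frac17 < \frac14$, and once all $n$ fresh branches are exhausted $\Min$ traps the token in a $\top$-free cycle. The main obstacle is making the continuation argument fully rigorous: one must verify that $\Min$'s routing through visited nodes is always available (so no extra coin tosses adjacent to $\top$ arise), that the fresh existential branches are genuinely one-shot, and that summing their contributions---together with the possibility of $\Max$ revisiting the false clause---never exceeds the geometric bound.
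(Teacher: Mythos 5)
Your proof is correct and follows essentially the same route as the paper's: $\Min$ mimics the universal winning strategy, steers the token to a falsified clause whose successors are all previously visited, $\Min$-owned literal nodes, and then simply repeats its earlier moves, so that $\Max$'s only escape is through a fresh primed chain $\neg y_i'' \to \neg y_i' \to \neg y_i$. Your geometric recurrence $W \le \frac{1}{8} + \frac{1}{8}W$, giving $W \le \frac{1}{7}$, is in fact a sharper accounting of repeated escape attempts across several existential gadgets than the paper's single-gadget bound of $\frac{1}{4}$, but both suffice for the stated bound $p + \frac{1}{4}q$.
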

\begin{proof}
    In this case, the strategy of the $\Min$ player is to mimic the universal winning strategy. As a consequence, upon gaining control of the $\land$-node, the $\Min$ player can transition to a clause $C_j$ that is falsified by the assignment induced by the players' choices. 
    Assuming rational moves, the $\Max$ player wins with probability $p$ before visiting a clause $C_j$, and arrives at $C_j$ and gains control with probability $q$. Since $C_j$ is falsified, and by construction of $G_\Phi$, there are only edges to literal-nodes $\ell$ that have been visited before and that are controlled by the $\Min$ player.
    Assume the $\Min$ player simply repeats their previous moves.
    The only chance for the $\Max$ player to win against this strategy is to eventually gain control of a literal node $y_i$ or $\neg y_i$ that has not been visited before and go to the target node $\top$ from there.
    To do that, they must first go through literal nodes $y_i''$ and $y_i'$ or $\neg y_i''$ and $\neg y_i'$, and gain control of both, which happens with probability $\frac{1}{4}$.
    Overall, the $\Max$ player can either win before they gain control of a clause node $C_j$ with probability $p$, or gain control of clause node $C_j$ with probability $q$, reach a previously unvisited literal node $y_i$ or $\neg y_i$ with probability at most $\frac{1}{4}$, and win from there.
    Thus the $\Max$ player can win with probability at most $p + \frac{1}{4}q$.\qed
\end{proof}


\begin{theorem}\label{thm:gameone-pspace-hard}
Deciding whether player $\Max$ can win an \sgameone\ with a reachability, parity, or energy objective 
with probability at least a given threshold $\theta$ is \PSpace-hard.
\end{theorem}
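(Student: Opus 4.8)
The plan is to finish the polynomial-time many-one reduction from QBF satisfiability (which is \PSpace-complete) that the construction of $G_\Phi$ sets up, and then to transfer the resulting reachability hardness to the parity and energy objectives by re-interpreting the target condition on the very same graph and coin-toss function $\toss \equiv \tfrac12$.

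For the reachability case, I would fix the threshold $\theta = p + \tfrac12 q$. By \Cref{lem:existentialstrategy}, if $\Phi$ is satisfiable then the existential player wins the evaluation game, so $\Max$ wins with probability at least $p + \tfrac12 q = \theta$; by \Cref{lem:universalstrategy}, if $\Phi$ is unsatisfiable then the universal player wins and $\Max$ wins with probability at most $p + \tfrac14 q$. Since \Cref{lem:fixedprobabilities} gives $q = 2^{-6n-2} > 0$, the two cases are separated, $p + \tfrac14 q < \theta$, and hence $\Phi$ is satisfiable if and only if $\Max$ wins with probability at least $\theta$. The closed forms $q_i = 2^{-6(i-1)}$ together with the recurrence for $p_i$ in \Cref{lem:fixedprobabilities} show that $\theta$ is a dyadic rational whose binary encoding has $O(n)$ bits, so the whole instance $(G_\Phi, \toss, \theta)$ is computable in polynomial time. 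This establishes \PSpace-hardness for reachability.

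For parity, I would keep $G_\Phi$ and $\toss$ unchanged and assign priority $2$ to the target $\top$ and priority $1$ to every other node (including $\bot$). Because $\top$ and $\bot$ are absorbing self-loops, a play reaches $\top$ exactly when its largest priority seen infinitely often is the even value $2$, and otherwise only the odd value $1$ recurs; thus parity-winning and reachability-winning coincide play-by-play. The explicit game is therefore the same stochastic game with the same set of winning plays, so its value is unchanged and the same $\theta$ witnesses hardness. For energy, I would take weights $-1$ on all non-target nodes, weight $0$ on $\top$, and initial credit $c = (|V|+1)(|V|+2)/2$; all of these are polynomial and the weights lie in $\{0,-1\}$.

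The main obstacle is the energy case, where---unlike parity---the winning plays do \emph{not} coincide with the reachability-winning plays, since the energy budget forbids reaching $\top$ too late. I would instead prove that the energy and reachability values are equal by two inclusions. One direction is immediate: any energy-winning play must reach $\top$ (otherwise the energy level decreases by one at every step and diverges to $-\infty$), so the energy-winning plays are contained in the reachability-winning plays and the energy value is at most the reachability value. For the converse, I would fix $\Max$'s optimal memoryless strategy for the reachability \sgameone; by \Cref{lem:toss-as-you-go-properties}(c),(e) this strategy forces the first closed cycle to be the self-loop at $\top$ with probability at least the reachability value, and does so within a path of length at most $c = (|V|+1)(|V|+2)/2$. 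Along every such play the energy level never drops below $c$ minus the number of steps needed to reach $\top$, which is non-negative by the choice of $c$, so the same strategy wins the energy game with at least the reachability value. Hence the two values agree and the threshold $\theta = p + \tfrac12 q$ proves \PSpace-hardness for energy as well. The delicate point is precisely this budget bookkeeping: it is the polynomial bound on the path length to a winning cycle provided by \Cref{lem:toss-as-you-go-properties} that lets a fixed polynomial credit suffice, keeping the reduction polynomial while preserving the value.
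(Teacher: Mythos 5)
Your proposal is correct and follows essentially the same route as the paper: the threshold $\theta = p + \tfrac12 q$ separated via \Cref{lem:existentialstrategy} and \Cref{lem:universalstrategy}, followed by the priority-$2$/priority-$1$ relabelling for parity and a unit-decrement weighting with initial credit $(|V|+1)(|V|+2)/2$ justified by \Cref{lem:toss-as-you-go-properties} for energy. The only differences are cosmetic (weight $0$ rather than $+1$ on the target self-loop) and that you spell out the two-inclusion value-equality argument for energy in more detail than the paper does.
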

\begin{proof}
    Given a QBF~$\Phi$, we construct the game graph $G_\Phi$ and compute the probabilities $p$ and $q$ as defined above. Clearly, this can be done in polynomial time. 
    Let $\theta = p + \frac{1}{2}q$.
    
    It follows from \Cref{lem:universalstrategy} and \Cref{lem:existentialstrategy} that 
    player $\Max$ wins the \sgameone\ with a reachability objective on the game graph $G_\Phi$ with probability at least $\theta$ if, and only if, the QBF~$\Phi$ is satisfiable. 
    This establishes \PSpace-hardness for deciding whether the winning probability for player $\Max$ in an \sgameone\ with a reachability objective is at least a given threshold $\theta$. 

    This result can be easily extended to \sgameone s with parity or energy objectives. 
    In particular, an \sgameone\ with a reachability objective can be transformed into one with a parity or energy objective while preserving $\Max$'s winning probability. 
    For parity objective, one can add a self-loop to each target node, assign them an even priority of $2$, and assign all other nodes an odd priority of $1$. 
    For energy objectives, a self-loop can similarly be added to each target node with a weight of $1$, while all other nodes are assigned a weight of $-1$; the initial energy level can then be set to $(|V|+1)\cdot(|V|+2)/2$, which is the upper bound of a path to close a winning cycle according to~\Cref{lem:toss-as-you-go-properties}. \qed
\end{proof}

\section{\SGAMEtwo}\label{sec:toss-start}



We show that computing the exact winning probability of $\Max$ for \sgametwo, with any of the objectives—reachability, parity, or energy—is computationally hard.  
In fact, the problem is \SharpP-complete.
We then propose approximation algorithms that efficiently estimate this probability, and demonstrate empirically that they achieve fast convergence.

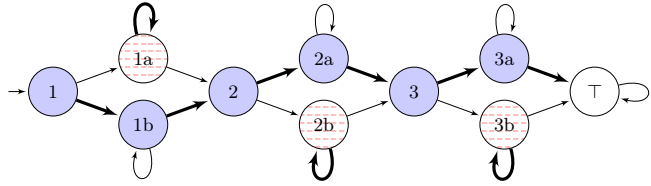
\begin{figure}[t] 
    \centering
    \scalebox{0.8}{
    \begin{tikzpicture}[scale=0.8, >=latex', shorten >=1pt, node distance=0.55cm and 1.5cm, on grid, auto]
    \node[state, initial, initial text=, fill=blue!20] (1) {1};
    \node[state, pattern={Lines[angle=0,distance=3pt,line width=0.3pt]}, pattern color=red!50, above right=of 1] (1a) {1a};
    \node[state, fill=blue!20, below right=of 1] (1b) {1b};
    
    \node[state, fill=blue!20, right=3cm of 1] (2) {2};
    \node[state, fill=blue!20, above right=of 2] (2a) {2a};
    \node[state, pattern={Lines[angle=0,distance=3pt,line width=0.3pt]}, pattern color=red!50, below right=of 2] (2b) {2b};
    
    \node[state, fill=blue!20, right=3cm of 2] (3) {3};
    \node[state, fill=blue!20, above right=of 3] (3a) {3a};
    \node[state, pattern={Lines[angle=0,distance=3pt,line width=0.3pt]}, pattern color=red!50, below right=of 3] (3b) {3b};
    
    \node[state, right=3cm of 3] (4) {$\top$};
    
    \path[->] (1) edge (1a);
    \path[->, line width=1.5pt] (1) edge (1b);
    \path[->] (1a) edge (2);
    \path[->,line width=1.5pt] (1b) edge (2);
    
    \path[->,line width=1.5pt] (2) edge (2a);
    \path[->] (2) edge (2b);
    \path[->,line width=1.5pt] (2a) edge (3);
    \path[->] (2b) edge (3);
    
    \path[->,line width=1.5pt] (3) edge (3a);
    \path[->] (3) edge (3b);
    \path[->,line width=1.5pt] (3a) edge (4);
    \path[->] (3b) edge (4);
    
    \path[->,line width=1.5pt] (1a) edge [loop above] ();
    \path[->] (2a) edge [loop above] ();
    \path[->] (3a) edge [loop above] ();
    
    \path[->] (1b) edge [loop below] ();
    \path[->,line width=1.5pt] (2b) edge [loop below] ();
    \path[->,line width=1.5pt] (3b) edge [loop below] ();
    
    \path[->] (4) edge [loop right] ();
    \end{tikzpicture}
    }
    \caption{A \sgametwo\ with a reachability objective in which player $\Max$ requires distinct strategies for exponentially many different arenas. 
    The figure illustrates one assignment for $\Gg_3$. 
    The optimal strategies for both players are highlighted using thick edges.
    }
    \label{fig:exponential}
\end{figure}

We begin with an example that sheds light on why computing the winning probability for player $\Max$ in \sgametwo\ is hard.
A strategy in \sgametwo\ of $\Max$, provides a strategy for each arena on $G$. 
Our objective is to compute the probability that player $\Max$ has a winning strategy in the randomly generated arena.
Note that it is possible that two arenas $\Aa$ and $\Aa'$ have the same strategy prescribed by $\Max$. 
However, for maximising winning probability, $\Max$ might require exponentially many different strategies. 
This is demonstrated by a simple class of games with reachability objectives, $\Gg_n$ which is a chain as shown in \Cref{fig:exponential} for $\Gg_3$. In this game, $\Max$ needs a specific path strategy which is following her own nodes, to win this particular game. And there are at least $2^n$ ways to assign the top and bottom nodes, such the path is unique, requiring a unique distinct strategy. 
This hints towards the fact that computing optimal strategy can be costly.

\subsection{\SharpP-Completeness}

The hardness follows from a reduction from a variant of the two-terminal reliability problem, which is known to be \SharpP-complete~\cite{Scott1986}.  
Given a directed graph $G=(V,E)$ with designated terminals $s,t \in V$, the two-terminal reliability problem asks for the probability that $s$ can reach $t$, assuming each edge $(u,v) \in E$ is independently present with probability $p$.  
This problem remains \SharpP-complete even for undirected or acyclic source-sink planar graphs of maximum degree three.  
We consider an \emph{adjusted} variant, also \SharpP-complete, where the start node $s$ has an incoming edge that exists with probability $p$; the probability of a path from $s$ to $t$ is thus $p \cdot \alpha$, where $\alpha$ is the reachability probability from $s$ to $t$ given that the incoming edge exists.  
To establish \SharpP-hardness for {\sgametwo}s, we give a logspace reduction from this adjusted problem to a {\sgametwo} with a reachability objective, and extend the argument to parity and energy objectives.
The intuition behind this reduction is that we construct a {\sgametwo} on a graph where the ownership of each node mimics the presence or absence of a corresponding edge in $G$.
 
For membership in \SharpP, the probability that $\Max$ wins a {\sgametwo} can be computed in \SharpP{} by enumerating all possible ownership assignments, checking for each whether $\Max$ has a winning strategy, and counting the fraction of assignments where $\Max$ wins.
The key difference between reachability objectives and parity or energy objectives is that, while a reachability game induced by a fixed ownership assignment can be solved in polynomial time, no such polynomial-time algorithm is known for parity or energy games. 
To obtain the \SharpP\ upper bound, we rely on the classic result that solving parity or energy games lies in \UP\ $\cap$ co-\UP~\cite{Jurdzinski98,BouyerFLMS08}, which allows to non-deterministically guess a unique short certificate for each game. 
Note that if the subproblem of deciding the winner in a parity or energy game were merely in \NP, this approach would fail: multiple short certificates could exist for a positive instance, so counting certificates would not correctly count winning assignments.


\begin{theorem}\label{thm:sharppcomplete}
Computing the winning probability of $\Max$ in \sgametwo\ with a reachability, parity, or energy objective is \SharpP-complete.
\end{theorem}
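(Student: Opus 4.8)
The plan is to establish the two halves of \SharpP-completeness independently: membership in \SharpP, and \SharpP-hardness via a logspace reduction from the adjusted two-terminal reliability problem \cite{Scott1986}, which the text fixes as our source of hardness. Throughout I treat the winning probability as a counting quantity: since in a \sgametwo{} the ownership of every node is fixed a priori and then revealed to both players, each of the $2^{|V|}$ outcomes of the coin tosses yields an ordinary (memoryless-determined) arena whose winner is well defined, and the winning probability is the toss-weighted fraction of arenas won by $\Max$.

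\textbf{Membership.} I would exhibit a nondeterministic polynomial-time machine whose number of accepting paths equals $N$, where the winning probability is $N/D$ for the easily computed common denominator $D=\prod_{v}\mathrm{den}(\toss(v))$. The machine first guesses an ownership assignment by, for each node $v$, branching into $\mathrm{den}(\toss(v))$ alternatives and declaring $v\in V_0$ on exactly $\mathrm{num}(\toss(v))$ of them (and $v\in V_1$ otherwise); this makes the multiplicity of each assignment $\Aa$ equal to its unnormalised probability weight. It then verifies, on the induced arena, that $\Aa$ is won by $\Max$ from $v_0$, accepting iff so. For reachability this verification is deterministic polynomial time (an attractor computation), so each winning $\Aa$ contributes its weight exactly once. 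For parity and energy I would use that deciding the winner of a fixed arena is in \UP{} $\cap$ co-\UP{} \cite{Jurdzinski98,BouyerFLMS08}: the machine guesses the \emph{unique} polynomial-size certificate (a winning memoryless strategy together with its verifying data) and accepts on precisely that one branch. Uniqueness is what makes $N$ correct; if the decision problem were only in \NP{}, a single winning arena could furnish many certificates and would be over-counted.

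\textbf{Hardness.} Given an adjusted reliability instance $(G=(V,E),s,t,p)$, I would build in logspace a \sgametwo{} $G'$ whose random node-ownerships mirror independent edge presence. For each vertex $v$ of $G$ with outgoing edges $e_1,\dots,e_k$, I introduce a chain of \emph{edge-nodes} $n_{e_1},\dots,n_{e_k}$, each with toss value $p$, where $n_{e_j}$ has edges both to the successor vertex of $e_j$ and to $n_{e_{j+1}}$ (the last one's second edge goes to a losing sink $\bot$), and the vertex node $v$ has a single outgoing edge to $n_{e_1}$. An edge-node owned by $\Max$ encodes ``present'': $\Max$ may either traverse the edge or skip further along the chain; owned by $\Min$ it encodes ``absent'': $\Min$ forwards along the chain rather than helping $\Max$. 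Because the vertex nodes have out-degree one their ownership is irrelevant and contributes no net factor to the probability, so $\Max$ reaches the target exactly when the sampled subgraph connects $s$ to $t$. Prepending one more edge-node of toss $p$ for the adjusted incoming edge and using it as the start node yields winning probability $p\cdot\alpha$. I would then lift the reduction to parity and energy by the colouring/weighting already used for the \sgameone{} hardness (target nodes even-priority / positive-weight with a sufficiently large initial credit, all others odd-priority / weight $-1$), which preserves $\Max$'s winning probability.

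\textbf{Main obstacle.} I expect the crux to be the faithfulness and exactness of the hardness reduction under the coin-toss model, where no node can be forced deterministically to a player since $\toss$ maps into the open interval $(0,1)$. The chain-of-edge-nodes gadget is precisely what resolves this: it reduces every genuine routing decision to a single player's choice at \emph{present} edge-nodes while rendering vertex ownership probabilistically inert, so that the connectivity (one-quantifier) nature of reliability is matched by $\Max$'s reachability objective and the count is preserved exactly. On the membership side, the corresponding subtlety is the uniqueness of the per-arena winner certificates, which is why the \UP{} $\cap$ co-\UP{} bound, rather than mere \NP{} membership, is indispensable for parity and energy.
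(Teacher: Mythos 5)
Your membership argument is essentially the paper's: weight each ownership assignment by branching according to the numerator and denominator of $\toss(v)$, decide each reachability arena in polynomial time, and for parity and energy guess the unique \UP{} certificate so that each winning assignment is counted exactly once; this half is correct.

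The hardness half has a genuine gap in the gadget. In your chain construction the game graph is fixed and only the ownership is random, so a $\Min$-owned edge-node $n_{e_j}$ still has the edge to the successor vertex of $e_j$; $\Min$ is therefore free to \emph{traverse} an absent edge rather than forward along the chain, and will do so whenever that traps the token in a region from which the target is unreachable. Concretely, take $G$ with vertices $s,a,t$ and edges $e_1=(s,a)$ and $e_2=(s,t)$, where $a$ leads only to a non-target sink, and sample $e_1$ absent and $e_2$ present: then $s$ reaches $t$ in the sampled subgraph, but in your game the token first enters $n_{e_1}$, which is $\Min$-owned, and $\Min$ moves to $a$ and kills the play before $n_{e_2}$ is ever reached. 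The game value therefore strictly undercounts the reliability, so the reduction is not exact. Your justification---that $\Min$ ``forwards along the chain rather than helping $\Max$''---silently assumes that traversing an absent edge can only help $\Max$, which is false. The repair is the construction the paper uses: let the node $e_{uv}$ mean ``the token has just crossed $(u,v)$ and now sits at $v$,'' and give it successors $e_{vw}$ for every outgoing edge $(v,w)$ of $v$ (plus $\bot$, and $\top$ when $v=t$). Every routing decision is then made at a present, hence $\Max$-owned, node, and $\Min$'s only power at an absent node is to end the play at $\bot$; this restores the exact correspondence between $\Max$ winning and $s$--$t$ connectivity in the sampled subgraph. With that change, your lift to parity and energy objectives goes through as in the paper.
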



\subsection{FPRAAS}
\label{subsec:fpras}

Although computing the exact probability of winning in \sgametwo\ is \SharpP-complete, this does not rule out the possibility of efficiently approximating it. However, it was shown in \cite{provan1983complexity} that approximation of the two-terminal reliability problem within a given $\varepsilon>0$ additive error (specified as a rational number in the binary notation) is \SharpP-hard, so the same holds for approximating the value of our game. This still does not rule out a possibility of approximating this value with a high probability. Indeed, we show that such a scheme exists by leveraging a Monte Carlo method (see, e.g.\ \cite{Rubinstein2007}) to randomly sample two-player games from the given distribution and checking how often in them player $\Max$ wins the game. When this is done for ``long enough'', the frequency of winning the game in randomly chosen instances is a good approximation of the actual probability winning, but only with probability some less than 1. In order to formally define ``long enough'', we will make use of a special case of the Hoeffding's inequality when applied to the case where all the random variables are identically distributed and whose value can only be $0$ or $1$.

\begin{theorem}[Hoeffding's inequality~\cite{hoeffding1963probability}]
\label{thm:hoef}
Let $X_1, \ldots, X_n$ be independent and identically distributed random variables with $X_i \in \{0,1\}$ almost surely. Then, for any $\varepsilon > 0$,
$
\Pr\Big(\Big|\frac{1}{n}\sum_{i=1}^{n} X_i - \mathbb{E}[X_1]\Big| \ge \varepsilon \Big) \le 2 \, e^{-2 \varepsilon^2 n}.
$
\end{theorem}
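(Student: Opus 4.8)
The plan is to apply the \emph{Chernoff bounding method}: control each tail by exponentiating and invoking Markov's inequality, then optimise over a free exponential parameter. Write $p = \mathbb{E}[X_1]$ and $S_n = \sum_{i=1}^{n} X_i$. I would first split the two-sided event by a union bound,
\[
\Pr\Big(\Big|\tfrac{1}{n}S_n - p\Big| \ge \varepsilon\Big) \le \Pr\big(S_n - np \ge n\varepsilon\big) + \Pr\big(np - S_n \ge n\varepsilon\big),
\]
and treat the upper tail in detail; the lower tail follows by applying the identical argument to the centred variables $p - X_i$, whose range again has length one, so by symmetry both tails obey the same estimate and the factor $2$ is produced.

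For the upper tail, fix any $\lambda > 0$ and combine Markov's inequality applied to $e^{\lambda(S_n - np)}$ with independence and the identical distribution of the $X_i$:
\[
\Pr\big(S_n - np \ge n\varepsilon\big) \le e^{-\lambda n \varepsilon}\, \big(\mathbb{E}\,e^{\lambda(X_1 - p)}\big)^{n}.
\]
The crux is to bound the centred moment generating factor $\mathbb{E}\,e^{\lambda(X_1 - p)}$. I would establish \emph{Hoeffding's lemma}: for any $Y$ with $Y \in [a,b]$ and $\mathbb{E}[Y]=0$ one has $\mathbb{E}\,e^{\lambda Y} \le e^{\lambda^2(b-a)^2/8}$. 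Applied to $Y = X_1 - p \in [-p,\,1-p]$, whose range has length $b-a = 1$, this gives $\mathbb{E}\,e^{\lambda(X_1-p)} \le e^{\lambda^2/8}$, so the tail is at most $e^{-\lambda n\varepsilon + n\lambda^2/8}$. Minimising the exponent over $\lambda > 0$ yields the optimal choice $\lambda = 4\varepsilon$, producing exponent $-2n\varepsilon^2$ and the single-tail bound $e^{-2n\varepsilon^2}$; summing the two tails gives the claimed $2e^{-2n\varepsilon^2}$.

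I expect \emph{Hoeffding's lemma} to be the main obstacle, as everything else is elementary bookkeeping. The cleanest route is to set $\psi(\lambda) = \log \mathbb{E}\,e^{\lambda Y}$, observe $\psi(0)=0$ and $\psi'(0)=\mathbb{E}[Y]=0$, and note that $\psi''(\lambda)$ equals the variance of $Y$ under the exponentially tilted measure $\mathrm{d}\mu_\lambda \propto e^{\lambda y}\,\mathrm{d}\mu$, which is again supported on $[a,b]$. Since any distribution supported on an interval of length $b-a$ has variance at most $(b-a)^2/4$ (Popoviciu's inequality), we obtain $\psi''(\lambda) \le (b-a)^2/4$ uniformly, and a second-order Taylor expansion $\psi(\lambda) = \tfrac{\lambda^2}{2}\psi''(\xi) \le \tfrac{\lambda^2(b-a)^2}{8}$ closes the lemma. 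With the lemma in hand, the final inequality follows purely from the elementary optimisation over $\lambda$ described above.
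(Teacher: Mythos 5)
Your proof is correct and complete: the Chernoff--Markov exponentiation, Hoeffding's lemma via the second-order Taylor expansion of the tilted log-moment-generating function together with Popoviciu's variance bound, the optimisation $\lambda = 4\varepsilon$, and the symmetric treatment of the lower tail all check out, and the exponent $-\lambda n\varepsilon + n\lambda^2/8$ indeed minimises to $-2n\varepsilon^2$. Note, however, that the paper does not prove this statement at all --- it is imported verbatim from Hoeffding's 1963 paper as a black-box tool for the FPRAAS analysis --- so there is no in-paper argument to compare against; your derivation is simply the standard (and essentially original) proof of the result.
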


\begin{definition}[Fully Polynomial Randomised Additive Approximation Scheme~(FPRAAS)]
Let $\mathcal{V}$ be a function mapping problem instances $x$ to a nonnegative real value $\mathcal{V}(x)$. 
A randomised algorithm $A$ is an additive \emph{$(\varepsilon,\delta)$-approximation scheme} for $\mathcal{V}$ if, 
for every instance $x$ and every $\varepsilon,\delta \in (0,1)$, the output $\hat{\mathcal{V}} = A(x,\varepsilon,\delta)$ satisfies
$
\Pr\big[|\hat{\mathcal{V}} - \mathcal{V}(x)| \le \varepsilon \big] \;\ge\; 1-\delta.
$

The algorithm $A$ is called a \emph{Fully Polynomial Randomised Additive Approximation Scheme (FPRAAS)}
if, for any input $(x,\varepsilon, \delta)$, $A$ is an additive $(\varepsilon,\delta)$-approximation scheme and its running time is polynomial in 
$|x|$, $\frac{1}{\varepsilon}$, and $\log \frac{1}{\delta}$.
\end{definition}

By Hoeffding's inequality (cf.~\Cref{thm:hoef}), choosing 
$
n \ge \frac{1}{2\varepsilon^2} \ln \frac{2}{\delta}
$ 
ensures that $A$ is indeed an additive $(\varepsilon,\delta)$-approximation scheme, and the running time is polynomial in $1/\varepsilon$ and $\ln(1/\delta)$.

Now, to get an FPRAAS each instance of the game on $G$ has to be solvable in polynomial time. Reachability games can be solved in linear time. Energy games can be solved in polynomial time when the  input weights are given in unary. Finally, parity games can be solved in polynomial time if the number of priorities is $\mathcal{O} (\log |G|)$ \cite{lehtinen2022recursive}.      

\begin{theorem}
{\Sgametwo}s with a reachability objective, a parity objective with $\mathcal{O}(\log |G|)$ priorities or an energy objective where weights are given in unary all admit an FPRAAS. 
\end{theorem}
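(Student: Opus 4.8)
The plan is to realise the Monte Carlo idea sketched just before the statement as a concrete algorithm and to verify the three defining conditions of an FPRAAS: unbiasedness of the estimator, the probabilistic error bound via Hoeffding's inequality, and polynomial running time for each of the three objective classes.

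First I would specify the sampling step. Given the graph $G$ with coin-toss function $\toss$, the algorithm independently draws $n$ arenas $\Aa_1,\dots,\Aa_n$, where each $\Aa_i$ is obtained by assigning every node $v\in V$ to $\Max$ with probability $\toss(v)$ and to $\Min$ otherwise, all draws independent. Since each $\toss(v)$ is a rational with a finite binary encoding, a single Bernoulli draw can be simulated with polynomially many fair coin flips, so generating one arena costs time polynomial in $|G|$.

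Next I would set $X_i=1$ if $\Max$ has a winning strategy in $\Aa_i$ and $X_i=0$ otherwise, and output the empirical mean $\hat{\mathcal{V}}=\frac{1}{n}\sum_{i=1}^{n}X_i$. By the determinacy results recalled in the preliminaries each $X_i$ is well defined, and by the very definition of \sgametwo{} its expectation $\mathbb{E}[X_1]$ equals the winning probability $\mathcal{V}(x)$. Applying Hoeffding's inequality (\Cref{thm:hoef}) to the i.i.d.\ $\{0,1\}$-valued variables $X_1,\dots,X_n$ and choosing $n\ge \frac{1}{2\varepsilon^2}\ln\frac{2}{\delta}$ yields $\Pr[\,|\hat{\mathcal{V}}-\mathcal{V}(x)|\ge\varepsilon\,]\le 2e^{-2\varepsilon^2 n}\le\delta$, which is exactly the required $(\varepsilon,\delta)$-guarantee; moreover this $n$ is polynomial in $1/\varepsilon$ and $\log(1/\delta)$.

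The hard part, and the step that dictates the three restrictions in the statement, is ensuring that each indicator $X_i$ is computable in time polynomial in $|G|$. For reachability this is immediate, as the induced two-player game is solved in linear time by attractor computation; for energy objectives with unary weights a polynomial-time solver exists; and for parity with $\mathcal{O}(\log|G|)$ priorities the recursive algorithm of~\cite{lehtinen2022recursive} runs in polynomial time. Multiplying this per-sample cost by the polynomially many samples $n$ gives a total running time polynomial in $|G|$, $1/\varepsilon$, and $\log(1/\delta)$, completing the FPRAAS. Without bounded priorities or unary weights no polynomial-time per-sample solver is known, so the scheme would no longer be fully polynomial; this is the only genuine obstacle, everything else being a direct assembly of the stated tools.
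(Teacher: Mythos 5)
Your proposal is correct and follows essentially the same route as the paper: sample arenas independently according to $\toss$, solve each sampled two-player game in polynomial time (linear-time attractor computation for reachability, the polynomial algorithms for unary-weight energy and for parity with $\mathcal{O}(\log|G|)$ priorities), and invoke Hoeffding's inequality with $n\ge \frac{1}{2\varepsilon^2}\ln\frac{2}{\delta}$ samples to get the $(\varepsilon,\delta)$-guarantee. The only addition beyond the paper's argument is your explicit remark on simulating rational-probability coin tosses with fair coins, which is a harmless refinement.
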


\paragraph{\bf Implementation and Experiments.}

\begin{figure}[t]
    \centering
    \includegraphics[width=.8\linewidth]{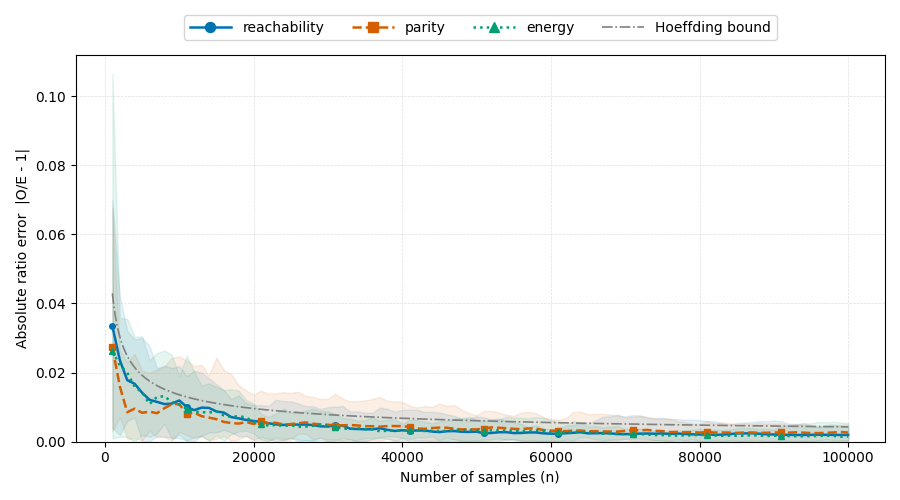}
    \caption{
    Error convergence of the FPRAAS for {\sgametwo} with reachability, parity, and energy objectives.
    Each line shows the mean absolute ratio error across 10 random game graphs, with shaded regions indicating variance.
    }
    \label{fig:fpras}
\end{figure}

We implemented the FPRAAS for {\sgametwo} with reachability, parity, and energy objectives~\cite{artifact2026}. 
The associated artifact includes the complete source code, experiment scripts, and datasets necessary to reproduce the results in this paper. 
It builds on two external tools: \href{https://github.com/gamegraphgym/ggg}{ggg}~\cite{ggg} for reachability and parity games, whose solvers are based on~\cite{BrimCDGR11,BenerecettiDM18}, and \href{https://github.com/pazz/egsolver}{egsolver}~\cite{egsolver} for energy games, whose solver is based on~\cite{BrimCDGR11}.
All experiments were performed on a machine with an Apple M3 8-core CPU and 16 GB of RAM running macOS. 
For each objective, we generated 10 random game graphs with 20 nodes and maximum outdegree of 20; parity objectives used up to 6 priorities, and energy objectives used an initial credit of 0 with weights in $\{-10, \ldots, 10\}$.
We computed the exact winning probability for $\Max$ by exhaustively solving all $2^{20}$ ownership assignments and compared it to the approximate probability from 100,000 random samples.
The mean absolute ratio error, $|O/E - 1|$, was computed across all 10 graphs.
\Cref{fig:fpras} shows the error (y-axis) against accumulated samples~(x-axis), with blue solid, orange dashed, and green dotted lines for reachability, parity, and energy, respectively.
Empirical errors converge much faster than Hoeffding’s bound ($\varepsilon=0.005$, $\delta=0.05$, dash-dotted grey line), and the variance decreases with increasing samples, as reflected by the narrowing shaded regions.
While exact computation takes over an hour for reachability and parity and over 24 hours for energy, our FPRAAS achieves comparable accuracy in about 6 minutes for reachability and parity, and 2 hours for energy, demonstrating that the FPRAAS is both efficient and effective.

\section{Discussion}\label{sec:discussion}
We studied the computational complexity of two-player infinite-duration games in which control of nodes is assigned randomly—either dynamically upon first visit or statically before play begins. These settings differ from the classical random-turn games, where control may be reassigned each time a node is visited.
Our results establish a clear complexity landscape across these variants and objectives. 
While most qualitative questions are \NL-complete, quantitative problems exhibit higher complexity: \PSpace-complete for \sgameone{} and \SharpP-complete for \sgametwo{}. 
To complement these hardness results, we developed efficient approximation schemes for the latter case and demonstrated their practical effectiveness.

We also explored the possibility of efficiently reducing stochastic games to random-turn games; although our findings provide partial insights, the existence of such a reduction remains an open question. 
A promising direction for future work is to investigate intermediate mechanisms between random-turn games and \sgameone{}, where ownership is fixed only on the $k$-th visit. 
Studying how increasing $k$ influences the winning probabilities could reveal a more nuanced understanding of the relationship between randomness in control and strategic outcomes.

Beyond theoretical interest, random arena games have potential applications in game design, particularly in ensuring fairness in competition. 
Our methods for computing exact winning probabilities can be used to assess whether a given game design between two competitive players is fair—i.e., whether both players have nearly equal chances of winning. 
Moreover, by adjusting node control probabilities or fixing control on selected nodes, one could systematically balance strategic advantages and improve fairness in game mechanics.

\begin{credits}
\subsubsection{\ackname} This work was supported by the EPSRC through grants EP/X03688X/1 and EP/X042596/1.

\end{credits}
%
%
%
\newpage
\bibliographystyle{splncs04}
\bibliography{main}

@article{lehtinen2022recursive,
  title={A recursive approach to solving parity games in quasipolynomial time},
  author={Lehtinen, Karoliina and Parys, Pawe{\l} and Schewe, Sven and Wojtczak, Dominik},
  journal={Logical Methods in Computer Science},
  volume={18},
  year={2022},
  publisher={Episciences. org}
}

@article{provan1983complexity,
  title={The complexity of counting cuts and of computing the probability that a graph is connected},
  author={Provan, J Scott and Ball, Michael O},
  journal={SIAM Journal on Computing},
  volume={12},
  number={4},
  pages={777--788},
  year={1983},
  publisher={SIAM}
}

@article{hoeffding1963probability,
  title={Probability inequalities for sums of bounded random variables},
  author={Hoeffding, Wassily},
  journal={Journal of the American statistical association},
  volume={58},
  number={301},
  pages={13--30},
  year={1963},
  publisher={Taylor \& Francis}
}

@article{Jurdzinski98,
  author       = {Marcin Jurdzinski},
  title        = {Deciding the Winner in Parity Games is in {UP} $\cap$ {co-UP}},
  journal      = {Inf. Process. Lett.},
  volume       = {68},
  number       = {3},
  pages        = {119--124},
  year         = {1998},
  doi          = {10.1016/S0020-0190(98)00150-1},
}

@article{CKS81,
author = {Chandra, Ashok K. and Kozen, Dexter C. and Stockmeyer, Larry J.},
title = {Alternation},
year = {1981},
issue_date = {Jan. 1981},
publisher = {Association for Computing Machinery},
address = {New York, NY, USA},
volume = {28},
number = {1},
issn = {0004-5411},
doi = {10.1145/322234.322243},
journal = {J. ACM},
month = jan,
pages = {114–133},
numpages = {20}
}

@article{Scott1986,
author = {Provan, J. Scott},
title = {The Complexity of Reliability Computations in Planar and Acyclic Graphs},
journal = {SIAM Journal on Computing},
volume = {15},
number = {3},
pages = {694-702},
year = {1986},
doi = {10.1137/0215050},
}

@book{Papadimitriou1994,
  author    = {Christos H. Papadimitriou},
  title     = {Computational Complexity},
  publisher = {Addison-Wesley},
  year      = {1994},
  address   = {Reading, MA},
}

@inproceedings{BouyerFLMS08,
  author       = {Patricia Bouyer and
                  Ulrich Fahrenberg and
                  Kim Guldstrand Larsen and
                  Nicolas Markey and
                  Jir{\'{\i}} Srba},
  editor       = {Franck Cassez and
                  Claude Jard},
  title        = {Infinite Runs in Weighted Timed Automata with Energy Constraints},
  booktitle    = {Formal Modeling and Analysis of Timed Systems, 6th International Conference,
                  {FORMATS} 2008, Saint Malo, France, September 15-17, 2008. Proceedings},
  series       = {Lecture Notes in Computer Science},
  volume       = {5215},
  pages        = {33--47},
  publisher    = {Springer},
  year         = {2008},
  doi          = {10.1007/978-3-540-85778-5\_4},
}

@inproceedings{EmersonJ91,
  author = {Emerson, E. Allen and Jutla, Charanjit S.},
  booktitle = {FOCS},
  isbn = {0-8186-2445-0},
  keywords = {dblp},
  pages = {368-377},
  publisher = {IEEE Computer Society},
  title = {Tree Automata, Mu-Calculus and Determinacy (Extended Abstract)},
  url = {http://dblp.uni-trier.de/db/conf/focs/focs91.html\#EmersonJ91},
  year = 1991
}

@book{mostowski1991games,
  title={Games with Forbidden Positions},
  author={Mostowski, A.W.},
  series={Preprint - Uniwersytet Gda{\'n}ski. Instytut Matematyki},
  url={https://books.google.co.uk/books?id=clvwtgAACAAJ},
  year={1991},
  publisher={UG}
}

@article{Peres09,
  title     = {Tug-of-war and the infinity Laplacian},
  author    = {Peres, Yuval and Schramm, Oded and Sheffield, Scott and Wilson, David B.},
  journal   = {Journal of the American Mathematical Society},
  volume    = {22},
  number    = {1},
  pages     = {167--210},
  year      = {2009},
  doi       = {10.1090/S0894-0347-08-00606-1},
}

@book{Rubinstein2007,
  author    = {Rubinstein, Reuven Y. and Kroese, Dirk P. },
  title     = {Simulation and the Monte Carlo Method},
  edition   = {2},
  series    = {Wiley Series in Probability and Statistics},
  year      = {2007},
  publisher = {Wiley},
  doi       = {10.1002/9780470230381},
}

@article{BrimCDGR11,
  author       = {Lubos Brim and
                  Jakub Chaloupka and
                  Laurent Doyen and
                  Raffaella Gentilini and
                  Jean{-}Fran{\c{c}}ois Raskin},
  title        = {Faster algorithms for mean-payoff games},
  journal      = {Formal Methods Syst. Des.},
  volume       = {38},
  number       = {2},
  pages        = {97--118},
  year         = {2011},
  doi          = {10.1007/S10703-010-0105-X},
}

@article{BenerecettiDM18,
  author       = {Massimo Benerecetti and
                  Daniele Dell'Erba and
                  Fabio Mogavero},
  title        = {Solving parity games via priority promotion},
  journal      = {Formal Methods Syst. Des.},
  volume       = {52},
  number       = {2},
  pages        = {193--226},
  year         = {2018},
  doi          = {10.1007/S10703-018-0315-1},
}

@article{Goldschlager1977,
  author    = {Leslie M. Goldschlager},
  title     = {The monotone and planar circuit value problems are log space complete for {P}},
  journal   = {SIGACT News},
  volume    = {9},
  number    = {2},
  pages     = {25--29},
  year      = {1977},
  month     = {Summer},
}

@article{Ehrenfeucht1979,
  author    = {Andrzej Ehrenfeucht and Jan Mycielski},
  title     = {Positional strategies for mean payoff games},
  journal   = {International Journal of Game Theory},
  volume    = {8},
  pages     = {109--113},
  year      = {1979},
  doi       = {10.1007/BF01768705},
}

@article{Gurvich88,
title = {Cyclic games and an algorithm to find minimax cycle means in directed graphs},
journal = {USSR Computational Mathematics and Mathematical Physics},
volume = {28},
number = {5},
pages = {85-91},
year = {1988},
issn = {0041-5553},
doi = {https://doi.org/10.1016/0041-5553(88)90012-2},
author = {V.A. Gurvich and A.V. Karzanov and L.G. Khachivan},
}

@inproceedings{AHI18,
  author       = {Guy Avni and
                  Thomas A. Henzinger and
                  Rasmus Ibsen{-}Jensen},
  title        = {Infinite-Duration Poorman-Bidding Games},
  booktitle    = {Web and Internet Economics - 14th International Conference, {WINE}
                  2018, Oxford, UK, December 15-17, 2018, Proceedings},
  pages        = {21--36},
  year         = {2018}
}

@article{Condon92,
title = {The complexity of stochastic games},
journal = {Information and Computation},
pages = {203-224},
year = {1992},
author = {Anne Condon}
}

@article{Peres2005RandomTurnHA,
  title={Random-Turn Hex and Other Selection Games},
  author={Yuval Peres and Oded Schramm and Scott Sheffield and David Bruce Wilson},
  journal={The American Mathematical Monthly},
  year={2005},
  volume={114},
  pages={373 - 387},
  url={https://api.semanticscholar.org/CorpusID:15583858}
}

@inproceedings{Celis-et-al-2010,
author = {Celis, L. Elisa and Devanur, Nikhil R. and Peres, Yuval},
title = {Local dynamics in bargaining networks via random-turn games},
year = {2010},
isbn = {3642175716},
publisher = {Springer-Verlag},
address = {Berlin, Heidelberg},
booktitle = {Proceedings of the 6th International Conference on Internet and Network Economics},
pages = {133–144},
numpages = {12},
location = {Stanford, CA, USA},
series = {WINE'10}
}

@inproceedings{PR89,
author = {Pnueli, A. and Rosner, R.},
title = {On the synthesis of a reactive module},
year = {1989},
isbn = {0897912942},
publisher = {Association for Computing Machinery},
noaddress = {New York, NY, USA},
doi = {10.1145/75277.75293},
booktitle = {Symposium on Principles of Programming Languages (POPL 1989)},
pages = {179–190},
nonumpages = {12},
nolocation = {Austin, Texas, USA},
noseries = {POPL '89}
}

@Inbook{Bloem2018,
  author       = {Roderick Bloem and
                  Krishnendu Chatterjee and
                  Barbara Jobstmann},
  editor       = {Edmund M. Clarke and
                  Thomas A. Henzinger and
                  Helmut Veith and
                  Roderick Bloem},
  title        = {Graph Games and Reactive Synthesis},
  booktitle    = {Handbook of Model Checking},
  pages        = {921--962},
  publisher    = {Springer},
  year         = {2018},
  doi          = {10.1007/978-3-319-10575-8\_27},
}

@misc{fijalkow2023gamesgraphs,
      title={Games on Graphs: From Logic and Automata to Algorithms}, 
      author={Nathanaël Fijalkow and C. Aiswarya and Guy Avni and Nathalie Bertrand and Patricia Bouyer and Romain Brenguier and Arnaud Carayol and Antonio Casares and John Fearnley and Paul Gastin and Hugo Gimbert and Thomas A. Henzinger and Florian Horn and Rasmus Ibsen-Jensen and Nicolas Markey and Benjamin Monmege and Petr Novotný and Pierre Ohlmann and Mickael Randour and Ocan Sankur and Sylvain Schmitz and Olivier Serre and Mateusz Skomra and Nathalie Sznajder and Pierre Vandenhove},
      year={2025},
      doi={10.48550/arXiv.2305.10546},
}

@CONFERENCE{DeAlfaro2001279,
  author       = {Luca de Alfaro and
                  Thomas A. Henzinger and
                  Rupak Majumdar},
  title        = {From Verification to Control: Dynamic Programs for Omega-Regular Objectives},
  booktitle    = {16th Annual {IEEE} Symposium on Logic in Computer Science, Boston,
                  Massachusetts, USA, June 16-19, 2001, Proceedings},
  pages        = {279--290},
  publisher    = {{IEEE} Computer Society},
  year         = {2001},
  doi          = {10.1109/LICS.2001.932504},
}

@inproceedings{kupferman1997module,
  title={Module checking revisited},
  author={Kupferman, Orna and Vardi, Moshe Y},
  booktitle={International Conference on Computer Aided Verification},
  pages={36--47},
  year={1997},
  organization={Springer}
}

@article{alur2002alternating,
  title={Alternating-time temporal logic},
  author={Alur, Rajeev and Henzinger, Thomas A and Kupferman, Orna},
  journal={Journal of the ACM (JACM)},
  volume={49},
  number={5},
  pages={672--713},
  year={2002},
  publisher={ACM New York, NY, USA}
}

@article{ZwickP96,
  author       = {Uri Zwick and
                  Mike Paterson},
  title        = {The Complexity of Mean Payoff Games on Graphs},
  journal      = {Theor. Comput. Sci.},
  volume       = {158},
  number       = {1{\&}2},
  pages        = {343--359},
  year         = {1996},
  doi          = {10.1016/0304-3975(95)00188-3},
}

@article{lazarus1999combinatorial,
  title={Combinatorial games under auction play},
  author={Lazarus, Andrew J and Loeb, Daniel E and Propp, James G and Stromquist, Walter R and Ullman, Daniel H},
  journal={Games and Economic Behavior},
  volume={27},
  number={2},
  pages={229--264},
  year={1999},
  publisher={Elsevier}
}

@article{lazarus1996richman,
  title={Richman games},
  author={Lazarus, Andrew J and Loeb, Daniel E and Propp, James G and Ullman, Daniel},
  journal={Games of no chance},
  volume={29},
  pages={439--449},
  year={1996},
  publisher={Cambridge University Press}
}

@article{avni2019infinite,
  title={Infinite-duration bidding games},
  author={Avni, Guy and Henzinger, Thomas A and Chonev, Ventsislav},
  journal={Journal of the ACM (JACM)},
  volume={66},
  number={4},
  pages={1--29},
  year={2019},
  publisher={ACM New York, NY, USA}
}

@article{AvniGG25pawns,
  author       = {Guy Avni and
                  Pranav Ghorpade and
                  Shibashis Guha},
  title        = {A Game of Pawns},
  journal      = {Log. Methods Comput. Sci.},
  volume       = {21},
  number       = {2},
  year         = {2025},
  doi          = {10.46298/LMCS-21(2:3)2025},
}

@inproceedings{ChatterjeeJH04StochasticParity,
  author       = {Krishnendu Chatterjee and
                  Marcin Jurdzinski and
                  Thomas A. Henzinger},
  editor       = {J. Ian Munro},
  title        = {Quantitative stochastic parity games},
  booktitle    = {Proceedings of the Fifteenth Annual {ACM-SIAM} Symposium on Discrete
                  Algorithms, {SODA} 2004, New Orleans, Louisiana, USA, January 11-14,
                  2004},
  pages        = {121--130},
  publisher    = {{SIAM}},
  year         = {2004},
  url          = {http://dl.acm.org/citation.cfm?id=982792.982808},
}

@article{ChatterjeeH08StochasticMeanPayoff,
  author       = {Krishnendu Chatterjee and
                  Thomas A. Henzinger},
  title        = {Reduction of stochastic parity to stochastic mean-payoff games},
  journal      = {Inf. Process. Lett.},
  volume       = {106},
  number       = {1},
  pages        = {1--7},
  year         = {2008},
  doi          = {10.1016/J.IPL.2007.08.035},
}

@software{artifact2026,
  author       = {Bahmani, Sarvin and
                  Ibsen-Jensen, Rasmus and
                  Paul, Soumyajit and
                  Schewe, Sven and
                  Slivovsky, Friedrich and
                  Tang, Qiyi and
                  Wojtczak, Dominik and
                  Zhu, Shufang},
  title        = {Experiments for Random Arena Games},
  month        = jan,
  year         = 2026,
  publisher    = {Zenodo},
  doi          = {10.5281/zenodo.18166538},
}

@misc{ggg,
  author       = {Pete Austin and Daniele Dell'Erba and Patrick Totzke},
  title        = {Game Graph Gym},
  publisher    = {GitHub},
  howpublished = {\url{https://github.com/gamegraphgym/ggg}},
  note         = {Commit: 5d31ac68e36cd781bfa00b9672ea0861872c45cf, Accessed: 2025-10-02}
}

@misc{egsolver,
  author       = {Patrick Totzke},
  title        = {Egsolver},
  publisher    = {GitHub},
  howpublished = {\url{https://github.com/pazz/egsolver}},
  note         = {Commit:4f8c500f892c10073a8e3b52cf6700ddbf2deeda, Accessed: 2025-10-10}
}

\newpage
\appendix

\section{Missing Proofs for \Cref{sec:qualitative}}\label{app:qualitative}

Given a random-turn game with a reachability objective played on $G = (V, E)$, we can reduce it to a stochastic reachability game played on $G' = (V', E')$.
We have $\top \in V'$ and, 
for each node $v \neq \top$ in $G$, we create three copies in $G'$, making $v$ in $G'$ a random node and $(v^+)$ and $(v^-)$ owned by $\Max$ and $\Min$ respectively. 
The set of edges $E'$ is defined as follows:
\begin{itemize}
    \item $(v, v^+), (v, v^-) \in E'$ each transition with probability $\toss(v)$ and $1 - \toss(v)$, respectively, for all random nodes $v \in V'$;
    \item $(v^+, u), (v^-, u) \in E'$ for all $\Max$-controlled nodes $v^+$ and $\Min$-controlled nodes $v^-$ such that $(v, u) \in E$. 
\end{itemize}
 
 Consider the game graph in \Cref{fig:explicit-game}(a), we have the stochastic game reduced from it shown in the top figure of \Cref{fig:explicit-game}(b). 

 \thmAlmostRTG*
\begin{proof}
We first show that deciding whether $\Max$ wins almost surely is in~\NL.
Given a random-turn game played on a game graph $G$, let $\toss$ denote the coin-toss function.
Observe that a node in $G$ is almost-surely (and surely) losing for $\Max$ if, and only if, it cannot reach the target state.
For the ``only if" direction, the statement is immediate: if a node cannot reach the target, then $\Max$ loses surely and thus almost surely.
For the converse, suppose that a node $v$ can reach the target.
Let $v_0v_1v_2\ldots v_n$ be the shortest path from $v=v_0$ to the target $v_n$, and let $\Max$ play to follow this path.
Then the probability of reaching the target from $v$ is at least $\prod_{i=0}^{n-1}\toss(v_i) > 0$, implying that $v$ is not almost-surely losing for $\Max$.

Next, we show that a node $v$ is almost-surely winning for $\Max$ if, and only if, it cannot reach a surely losing node.
It is clear that reaching a surely losing node makes the overall winning probability less than one.
Conversely, suppose $v$ cannot reach a surely losing node; equivalently, every state reachable from $v$ has a path to the target.
For each such node, we fix $\Max$’s strategy to follow a shortest path to the target.
The stochastic game induced by these fixed strategies (for $\Max$) and any strategy of $\Min$ then becomes a Markov chain in which the target node is the only bottom SCC.
Hence, starting from $v$, $\Max$ wins almost surely in both the induced stochastic game and the original random-turn game.

To decide the complement problem, whether $\Max$ does not win almost surely, we have an {\NL} algorithm that searches for a path from the initial node to a node $v$ that cannot reach the target.
This can be done in \NL, since both the st-connectivity and the st-non-connectivity problems are \NL-complete.
Since $\mathbf{NL} = \mathbf{co}\mbox{-}\mathbf{NL}$, the problem whether $\Max$ wins almost surely is in \NL.

For \NL-hardness, we reduce from the st-connectivity problem, which asks whether $t$ is reachable from $s$ in a directed graph $G$.
Given an instance $\langle G, s, t \rangle$, we construct a random-turn game on a graph $G'$ by having $s$ as the initial node and $t$ as the target node.
Moreover, for every other node $v$ in $G$, we add an edge $(v, s)$ in $G'$.
Then $\Max$ wins almost surely on $G'$ if and only if $t$ is reachable from $s$ in $G$.

If $t$ is reachable from $s$, then every node in $G'$ can reach $s$, and no node is surely losing for $\Max$.
Thus, $\Max$ wins almost surely from $s$.
Otherwise, if $t$ is not reachable from $s$ in $G$, it remains unreachable in $G'$, making $s$ a surely losing node for $\Max$ with a winning probability of $0$.

Therefore, deciding whether $\Max$ wins almost surely in a random-turn game is \NL-complete.
\qed
\end{proof}

\thmSureRTG*
\begin{proof}
We show that $\Max$ does not win surely if and only if there is a ``bad'' lasso-path in the game graph which does not contain the target state. 
Guessing such a lasso path can be done in \NL.
For \NL-harness, we reduce from the st-non-connectivity problem asking whether $t$ is \emph{not} reachable from $s$ on a directed graph $G$, which is \NL-complete.
We construct a game graph $G'$ with $n$ layers of nodes, where each node $(v,i)$ representing node $v$ can be reached from $s$ after exactly $i$ steps in $G$.
We also add a fresh target node $\top$ and a sink node $\bot$, which is only reached from $(t, i)$ for all $0\le i \le n$.
The lasso paths in $G'$ should contain the sink node $\bot$ since the only cycle in $G'$ is the self-loop on $\bot$. 
Now, it is easy to see that $t$ is not reachable from $s$ in $G$ if and only if $\Max$ wins the random-turn game on $G'$, that is, there is no lasso path in $G'$.
\qed
\end{proof}

\thmSureAllGames*
\begin{proof}
We first observe that sure winning for {\sgameone} and {\sgametwo} is the same as sure winning for random-turn games.
Since the existence of a ``bad'' lasso path witnesses that $\Max$ cannot surely win any of the three games.
It follows from \Cref{thm:sure-rtg} that it is \NL-complete to decide whether $\Max$ surely wins an {\sgameone} or a {\sgametwo} with a reachability, parity or energy objective.

The next thing we observe is that almost-sure winning coincides with sure winning for {\sgameone}s and {\sgametwo}s.    
It is obvious that if $\Max$ wins a game surely, then they win it almost surely.
However, in general, it is not necessary that $\Max$ wins a game surely if they win it almost surely, as we have seen for the random-turn games.
For {\sgameone}s and {\sgametwo}s, we give a contrapositive argument:
if $\Max$ does not win the game surely, $\Max$ does not win the game almost surely.
Assume $\Max$ does not win the game surely, then there is a ``bad'' lasso path in the game graph.
For {\sgameone}s, a game state in which every node on the ``bad'' lasso-path is assigned to $\Min$ is generated with positive probability, while for {\sgametwo}s, if player $\Min$ follows this lasso-path, $\Min$ wins with positive probability since this game state is reachable with positive probability. In both cases, the winning probability of $\Max$ is strictly less than one.
\qed
\end{proof}

\section{Missing Proofs for \Cref{sec:toss-start}}\label{app:toss-at-start}

\subsection{\SharpP-Hardness} 
\label{subsec:sharpp-lower}

We show that the problem is \SharpP-hard by reducing from two-terminal reliability problem. 
Given a finite directed graph \(G=(V,E)\) with $E \subseteq V \times V$, 
the two-terminal reliability problem asks for the probability that a designated terminal \(s\) can reach a target terminal \(t\), 
assuming each edge \(e\in E\) is present independently with probability \(p\).
It is known that solving two-terminal reliability problem is \SharpP-complete, even for undirected and acyclic directed source-sink planar graphs having degree at most three, that is, a node has at most three outgoing edges~\cite{Scott1986}. 


We consider a variant of the two-terminal reliability problem, called the \emph{adjusted} two-terminal reliability problem, 
in which we assume that the start node \(s\) has an incoming edge, and this incoming edge is present with probability \(p\).
Therefore, the probability of having a path from $s$ to $t$ in the variant would be $p \times \alpha$, where $\alpha$ is the probability of reaching $t$ from $s$ considering that the incoming edge to $s$ exists. Hence, the following result holds:

\begin{corollary}\label{cor:adjusted-reliability}
	The \emph{adjusted} two-terminal reliability problem remains \SharpP-complete.
\end{corollary}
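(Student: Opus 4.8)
The plan is to establish both membership and hardness by relating the adjusted problem directly to the standard two-terminal reliability problem on a minimally augmented graph. For membership in \SharpP, I would observe that an instance of the adjusted problem is itself an ordinary two-terminal reliability instance: given $(G,s,t,p)$ together with the extra probabilistic incoming edge, form the graph $G' = (V \cup \{s'\}, E \cup \{(s',s)\})$ with a fresh source $s'$, and note that the adjusted value is exactly the reachability probability from $s'$ to $t$ in $G'$ under independent edge presence with probability $p$. Since the standard problem lies in \SharpP (a nondeterministic machine guesses an edge subset and verifies $s' \to t$ reachability in polynomial time), the adjusted problem inherits this membership.

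For hardness, I would give a reduction from the standard problem. Given a standard instance with value $\alpha$ (the probability that $s$ reaches $t$), construct the augmented instance $G'$ above, which is trivially computable in logarithmic space. The value of the adjusted problem on $G'$ equals $p \cdot \alpha$: any path from $s'$ must first traverse the edge $(s',s)$, present with probability $p$ independently of all events inside $G$, and thereafter reach $t$ with probability $\alpha$, which is unaffected by the added edge. Hence an oracle for the adjusted value recovers $\alpha$ by a single division by the fixed rational $p$, a polynomial-time post-processing step. This constitutes a valid counting (Turing) reduction, so the \SharpP-hardness of the standard problem transfers to the adjusted one.

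To confirm that the adjusted problem stays \SharpP-complete on the restricted graph classes used in the downstream reduction to \sgametwo{}s, I would check that the augmentation preserves them. Adding the pendant node $s'$ with a single out-edge leaves the out-degree of every original node unchanged and gives $s'$ out-degree one, so a bound of three is maintained; attaching a pendant edge preserves planarity; and if $G$ is acyclic with $s$ a source, then $G'$ is acyclic with $s'$ as the new source, so the source-sink structure is retained.

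The main obstacle is mild: it is bookkeeping within the \SharpP framework rather than any combinatorial difficulty. I must phrase the target as a genuine \SharpP count, which is cleanest by fixing $p = \tfrac12$ and reasoning about operational subgraphs, where the reduction is in fact parsimonious since $N(G',s',t) = N(G,s,t)$ (the subgraphs in which $(s',s)$ is absent contribute nothing, so forcing it present leaves the count unchanged). The only effect on the probabilistic value is the rescaling factor coming from the one added edge, matching $p \cdot \alpha$; once this is set up, both directions are immediate.
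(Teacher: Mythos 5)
Your proposal is correct and follows essentially the same route as the paper, which justifies the corollary by the single observation that the adjusted value equals $p \cdot \alpha$ for the standard reliability value $\alpha$, so an oracle for the adjusted problem recovers $\alpha$ by dividing by $p$. Your additional bookkeeping (the explicit augmented graph with fresh source $s'$, \SharpP{} membership, preservation of the restricted graph classes, and the parsimonious count $N(G',s',t)=N(G,s,t)$) fills in details the paper leaves implicit but does not change the argument.
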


To prove \SharpP-hardness of {\sgametwo}s with any of the objectives, we first show that there is a logspace reduction from \emph{adjusted} two-terminal reliability problem to \sgametwo\ with a reachability objective.
We then show this hardness result can be easily extended to parity and energy objectives.


\paragraph{Game graph construction.} 
Given an instance of the \emph{adjusted} two-terminal reliability problem $\langle G,s,t,p  \rangle$, we construct a directed game graph $G' = (V', E')$ as follows:
For every edge $(u, v) \in E$ such that $(u,v)$ is not the incoming edge of $s$, we introduce a node $e_{uv}$. Moreover, we introduce three new nodes $init$, $\top$ and $\bot$. Hence, $V' = \left\{e_{uv} \mid (u,v) \in E\right\} \cup \{init, \bot, \top\}$ and $|V'| = |E| + 2$.
Edges are added as follows: 
\begin{itemize}
	\item
	For every node $e_{uv} \in V'$ such that $u=s$, add edge $(init, e_{uv})$.
	\item
	For every node pair $e_{uv}, e_{vw} \in V'$, add edge $(e_{uv}, e_{vw})$.
	\item
	For every node $e_{uv} \in V'$ such that $v=t$, add edge $(e_{uv}, \top)$.
	\item 
	For every node $v' \in V'$ such that $v' \neq \top$, add edge $(v', \bot)$.
    \item 
    For $\bot$, add self-loops, that is, edges $(\bot,\bot)$ and $(\top,\top)$.
\end{itemize}
In the constructed game graph $G' = (V', E')$, each edge $(u,v) \in E$ in $G$ (except the incoming edge to $s$) is represented by an intermediate node $e_{uv} \in V'$, whose ownership is determined randomly: with probability $p$ it is assigned to $\Max$ (indicating the edge is kept), and with probability $1 - p$ it is assigned to $\Min$ (indicating the edge is removed). 
That is, we have the coin toss function $\toss(v) = p$ for all $v \in V'$. 
The incoming edge to $s$ is represented by the node $init$, which is also assigned randomly: if $\Max$ controls $init$, the edge exists and the game can proceed; otherwise, the game moves to $\bot$.

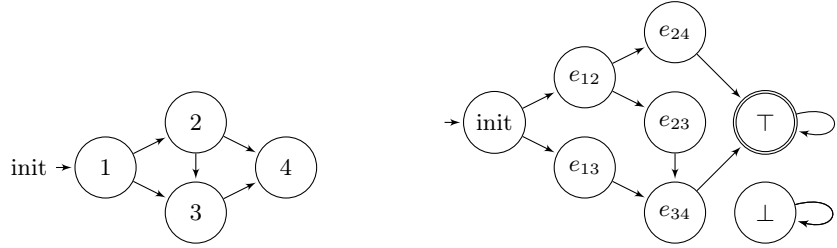
\begin{figure}[t]
    \centering
    \begin{subfigure}{.48\textwidth} 
    \centering
    \begin{tikzpicture}[scale=0.6, >=latex',shorten >=1pt,node distance=3cm,on grid,auto]
    \begin{scope}[shift={(0,5)}]   

    \node[state,initial, initial text=init] (1) at (0,0) {1};
    \node[state] (2) at (2,1) {2};
    \node[state] (3) at (2,-1){3};
    \node[state] (4) at (4,0) {4};
    
    \path[->] (1) edge node [midway, above] {} (2);
    \path[->] (1) edge node [midway, above] {} (3);
    \path[->] (2) edge node [midway, above] {} (3);
    \path[->] (2) edge node [midway, above] {} (4);
    \path[->] (3) edge node [midway, above] {} (4);
    \end{scope}

    \end{tikzpicture}
    \caption{A game graph $G$ for an instance of the \emph{adjusted} two-terminal reliability problem with start node $1$ and target node $4$.}
    \label{fig:reduction-a}
    \end{subfigure}
    \hfill
    \begin{subfigure}{.48\textwidth} 
    \centering
    \begin{tikzpicture}[scale=0.6, >=latex',shorten >=1pt,node distance=3cm,on grid,auto]
    \node[state] (s1) at (6, -2) {$\bot$}; 
    \node[initial, state ,initial text={}] (init) at (0, 0) {init}; 
    
    \node[state] (e12) at (2,1) {$e_{12}$};
    \node[state] (e13) at (2,-1){$e_{13}$};
    
    \node[state] (e24) at (4,2) {$e_{24}$};
    \node[state] (e23) at (4,0) {$e_{23}$};
    
    \node[state] (e34) at (4,-2) {$e_{34}$};
    
    \node[state, accepting] (T) at (6,0) {$\top$}; 
    
    \path[->] (init) edge (e12)
          (init) edge (e13);
    
    \path (s1) edge [loop right] (s1); 
    
    \path[->] (e12) edge (e24)
          (e12) edge (e23);
    
    \path[->] (e13) edge (e34);
    
    \path[->] (e24) edge (T);
    
    \path[->] (e23) edge (e34); 
    
    \path[->] (e34) edge (T);
    
    \path (T) edge [loop right] (T); 
    \path (s1) edge [loop right] (s1); 
    \end{tikzpicture}
    \caption{The game graph $G'$ for \sgametwo\ constructed from $G$ on the left.}
    \label{fig:reduction-b}
    \end{subfigure}
    \caption{Game graph construction for the \SharpP-hardness reduction. 
    In the game graph $G'$ on the right, $init$ is the start node, $\top$ is the target, and all nodes except $\top$ have an directed edge to $\bot$ which is omitted in the figure for simplicity. 
    }
    \label{fig:hardness-reduction}
\end{figure}

An example of the \SharpP-hardness reduction is provided in \Cref{fig:hardness-reduction}: 
given a game graph $G$ representing an instance of the \emph{adjusted} two-terminal reliability problem in \Cref{fig:reduction-a}, a game graph $G'$ for \sgametwo\ in \Cref{fig:reduction-b} is constructed.

Intuitively, the presence of an edge \((u,v) \in E\) (resp. \(init\)) in \(G\) is reflected by the ownership of the corresponding node \(e_{uv}\) (resp. \(init\)) in \(G'\): 
if the edge is present in \(G\), then player \(\Max\) owns the corresponding node in \(G'\) and can choose to move to a successor other than the rejecting sink \(\bot\) to continue the reachability game; 
otherwise, player \(\Min\) owns this node and transition to the sink \(\bot\) to win the reachability game.

\begin{lemma}\label{lem:reliability}
    Let $\langle G,s,t,p  \rangle$ be an instance of the \emph{adjusted} two-terminal reliability problem. 
    Let $\langle G', \toss \rangle$ be an instance of the \sgametwo\ with a reachability objective where $G' = (V', E')$ is the game graph constructed as above and $\toss(v) = p$ for all $v\in V'$.
    We have the probability that $\Max$ wins the game on $G'$ is equal to the probability of the \emph{adjusted} two-terminal reliability problem $\langle G,s,t,p  \rangle$.
\end{lemma}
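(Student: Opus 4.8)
The plan is to set up an explicit measure-preserving correspondence between the random experiment defining the \sgametwo\ on $G'$ and the one defining the adjusted reliability instance on $G$, and then to show that under this correspondence the event ``$\Max$ wins'' coincides pointwise with the adjusted reliability event. Once these two events are identified configuration by configuration, equality of the probabilities is immediate.

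First I would record the correspondence. In $G'$ every node is assigned to $\Max$ with probability $\toss(v)=p$ and to $\Min$ with probability $1-p$, all independently. The ownership of the sinks $\top$ and $\bot$ is immaterial, and the remaining nodes are exactly one node $e_{uv}$ for each edge $(u,v)\in E$ other than the incoming edge of $s$, together with the single node $init$ representing that incoming edge. Identifying ``$e_{uv}$ owned by $\Max$'' with ``edge $(u,v)$ present'' and ``$init$ owned by $\Max$'' with ``the incoming edge of $s$ present'' gives a bijection between ownership assignments of $G'$ and edge-presence configurations of the adjusted instance, and since each node/edge is independent with the same parameter $p$, this bijection preserves the probability measure.

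Next I would fix one ownership assignment and determine the winner of the induced deterministic reachability arena. The crucial structural fact is that every non-target node has $\bot$ as a successor, and $\bot$ is an absorbing non-target; hence moving to $\bot$ is optimal for $\Min$ and never helpful for $\Max$. Consequently, as soon as the token reaches a $\Min$-owned node the play is lost for $\Max$, so $\Max$ wins if and only if there is a path $init \to e_{s v_1} \to e_{v_1 v_2} \to \dots \to e_{v_{k-1} t} \to \top$ in $G'$ all of whose non-target nodes are owned by $\Max$. For the ``if'' direction $\Max$ simply follows such a path to $\top$; for the ``only if'' direction every winning play must avoid $\Min$-owned nodes entirely, and a $\Max$-owned walk reaching $\top$ contains a simple all-$\Max$ path to $\top$, so cycles in $G$ (and hence in $G'$) cannot help $\Max$ circumvent this condition. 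Translating back through the bijection, such an all-$\Max$ path exists exactly when the incoming edge of $s$ is present and there is a path $s\to v_1\to\dots\to v_{k-1}\to t$ in $G$ using only present edges, i.e.\ exactly when the adjusted reliability event holds.

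Finally I would sum the matching probabilities over all configurations: since the two events agree on every configuration and the bijection is measure preserving, the probability that $\Max$ wins on $G'$ equals the adjusted reliability probability $p\cdot\alpha$. The step I expect to be the main obstacle is the second one, namely the game-theoretic argument reducing ``$\Max$ wins'' to ``an all-$\Max$ path to $\top$ exists'': it requires justifying that $\Min$'s diversion to $\bot$ is optimal and, in the presence of cycles in $G$, that no $\Max$ strategy can profitably route the token through a $\Min$-owned node. The bijection and the probability bookkeeping are then routine.
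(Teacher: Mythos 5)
Your proposal is correct and follows essentially the same route as the paper's proof: a measure-preserving bijection between ownership assignments of $G'$ and edge-presence configurations of $G$, followed by the observation that $\Max$ wins a fixed arena iff there is an all-$\Max$ path $init, e_{sv_1},\ldots,e_{v_{n-1}t},\top$, which corresponds exactly to a present path from $s$ to $t$ together with the present incoming edge. Your second step merely spells out in more detail (the optimality of $\Min$ diverting to $\bot$, and extracting a simple all-$\Max$ path from a winning play) what the paper dispatches with ``it is easy to see''.
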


\begin{proof}
    Given an instance of the \emph{adjusted} two-terminal reliability problem $\langle G,s,t,p  \rangle$, we construct a \sgametwo\ $\langle G', \toss \rangle$ with a reachability objective as described above.

    Given a function $\chi: E \to \{0, 1\}$, where $\chi(e) = 1$ indicates that $e \in E$ is present in $G$ and $\chi(e) = 0$ otherwise, a subgraph $G(\chi)$ of $G$ is induced.  
    There are in total $2^{|E|}$ such functions, and each corresponding subgraph is generated with equal probability.  

    We can then define an ownership assignment function $\own_{\chi}: V'\setminus\{\bot,\top\} \to \{\Max, \Min\}$ on $G'$ such that $\own_{\chi}(init) = \Max$ if $\chi(init)=1$, $\own_{\chi}(e_{uv}) = \Max$ if $\chi((u,v)) = 1$ for $(u,v) \in E$,  and $\own_{\chi}(v) = \Min$ otherwise.  

    There are in total $2^{|E|}$ such assignment functions, and each assignment is generated with equal probability.  
    Similarly, given an ownership assignment function $\own_{\chi}$ on $G'$, we can define a corresponding function for $G$ indicating the presence of the edges.  

    It is easy to see that $s$ can reach $t$ in $G(\chi)$ if and only if $\Max$ wins the game on $G'$ under the ownership assignment function $\own_{\chi}$:  
    a path $v_0, v_1, \ldots, v_n$ in $G(\chi)$, where $v_0 = s$, $v_n = t$, and each edge $(v_i, v_{i+1})$ for $i \in [0,n-1]$ is present under $\chi$, 
    corresponds to the path $init, e_{sv_1}, \ldots, e_{v_{n-1}t}, \top$ in $G'$, where all these nodes are owned by $\Max$ under $\own_{\chi}$.

    Therefore, the probability that $\Max$ wins a {\sgametwo} on $G'$ is exactly the probability of the adjusted two-terminal reliability problem on $G$.\qed
\end{proof}

Since the adjusted two-terminal reliability problem is \SharpP-complete by \Cref{cor:adjusted-reliability}, and we have shown a logspace reduction from the adjusted two-terminal reliability problem to \sgametwo\ with a reachability objective by \Cref{lem:reliability}, we have established \SharpP-hardness for \sgametwo\ with a reachability objective.

	
	

Similar to \Cref{thm:gameone-pspace-hard}, this hardness result can be easily extended to {\sgametwo}s with parity or energy objectives. 
In particular, a \sgametwo\ with a reachability objective can be transformed into one with a parity or energy objective while preserving $\Max$'s winning probability. 
For a parity objective, one can add a self-loop to each target state, assign them an even priority of $2$, and assign all other states an odd priority of $1$. 
For an energy objective, a self-loop can similarly be added to each target state with a weight of $1$, while all other states are assigned a weight of $-1$; the initial energy level can then be set to $|V|+1$, the upper bound of the length of a path to close a winning cycle. 

It is not hard to see that $\Max$ wins a reachability game on $G$ if and only if $\Max$ wins the parity game (resp. energy game) on the new game graph $G'$ under the same random ownership assignment.
We have:
\begin{theorem}\label{thm:sharpphard-parity}
	Computing the winning probability of $\Max$ in a {\sgametwo} with a reachability, parity or an energy objective is \SharpP-hard.
\end{theorem}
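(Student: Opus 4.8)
The plan is to obtain \SharpP-hardness for all three objectives by leveraging the reduction already assembled: reachability serves as the base case, established through \Cref{lem:reliability} together with the \SharpP-completeness of the adjusted reliability problem from \Cref{cor:adjusted-reliability}, and the parity and energy cases are then lifted from reachability by a value-preserving objective transformation that does not disturb the randomised control structure.

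First I would settle the reachability objective. \Cref{cor:adjusted-reliability} gives that the adjusted two-terminal reliability problem is \SharpP-complete, and \Cref{lem:reliability} provides a logspace map from an instance $\langle G,s,t,p\rangle$ to a \sgametwo\ $\langle G',\toss\rangle$ with $\toss \equiv p$ whose $\Max$-winning probability \emph{equals} the adjusted reliability probability. The point to stress is that this map is value-preserving in a way that suffices for \SharpP-hardness: fixing $p=\tfrac12$, all $2^{k}$ ownership assignments of the $k$ coin-tossed nodes of $G'$ are equiprobable, so the winning probability is exactly $N/2^{k}$, where $N$ counts the assignments on which $\Max$ wins, and by \Cref{lem:reliability} this $N$ coincides with the \SharpP-complete count of subgraphs of $G$ in which $s$ reaches $t$. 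Hence any routine computing the winning probability recovers $N = 2^{k}\cdot(\text{winning probability})$ in polynomial time, establishing \SharpP-hardness of the reachability version.

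Next I would transfer this to parity and energy via the transformation indicated before the statement, which keeps the graph, the coin-toss function $\toss$, and therefore the induced distribution over ownership assignments unchanged, altering only the winning condition: add a self-loop at $\top$ with even priority $2$ (resp.\ weight $+1$), give every other node odd priority $1$ (resp.\ weight $-1$), and set the initial energy to $|V|+1$. The crucial verification is that for each \emph{fixed} ownership assignment $\own$—under which the arena is an ordinary deterministic two-player game—$\Max$ wins the reachability game on $G'$ if and only if $\Max$ wins the parity (resp.\ energy) game on the transformed graph. For parity this holds because $\top$ is the unique even-priority node and is absorbing, so seeing an even priority infinitely often is equivalent to eventually reaching $\top$. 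For energy it holds because every non-target step costs $1$, a reachability win can be realised along a simple path of length at most $|V|$, and the credit $|V|+1$ suffices to survive such a path; once $\top$ is reached, its $+1$ self-loop keeps the energy non-negative forever, while if $\top$ is unreachable the energy is inevitably driven below zero.

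Because the winning indicator for each $\own$ is identical across the three objectives, the overall $\Max$-winning probability is the same in all three games, so the reachability hardness lifts verbatim to parity and energy; combined with the \SharpP\ membership of \Cref{thm:sharppcomplete}, this also yields \SharpP-completeness. The only genuinely new ingredient beyond the prior lemmas is the per-assignment equivalence of winning conditions, and I expect the energy case to be the most delicate point, since it hinges on the credit bound $|V|+1$ dominating the length of a shortest winning path—this in turn follows from the simple-path bound for reachability on a fixed arena.
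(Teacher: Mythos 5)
Your proposal is correct and follows essentially the same route as the paper: reachability hardness via \Cref{cor:adjusted-reliability} and \Cref{lem:reliability}, then a value-preserving lift to parity and energy using the identical gadget (absorbing target with even priority $2$ resp.\ weight $+1$, all other nodes priority $1$ resp.\ weight $-1$, initial credit $|V|+1$). Your explicit justification that the probability encodes the \SharpP-complete count, and your per-assignment verification of the winning-condition equivalence, merely flesh out steps the paper leaves as ``not hard to see.''
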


\subsection{Membership in \SharpP}
\label{subsec:sharpp-upper}

We now show membership in \SharpP.
\begin{theorem}\label{thm:sharpp-upperbound}
	Computing the winning probability of $\Max$ in a {\sgametwo} with a reachability, parity or an energy objective is in \SharpP.
\end{theorem}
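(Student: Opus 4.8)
The plan is to express the winning probability as a ratio $N/D$ in which the denominator $D$ is computable in polynomial time and the numerator $N$ is exactly the number of accepting computations of a nondeterministic polynomial-time machine. Write each coin toss as $\toss(v)=a_v/b_v$ with $0<a_v<b_v$ integers given in binary, and let $V_0^{\own},V_1^{\own}$ denote the nodes assigned to $\Max$ and $\Min$ by an ownership assignment $\own$. Then
\[
\Pr[\Max\text{ wins}]=\sum_{\own}\Big(\prod_{v\in V_0^{\own}}\toss(v)\prod_{v\in V_1^{\own}}(1-\toss(v))\Big)\cdot\mathbf{1}[\Max\text{ wins }\own].
\]
Putting everything over the common denominator $D=\prod_{v\in V}b_v$, a single integer of polynomial bit-length that is computable in polynomial time, this equals $N/D$ with
\[
N=\sum_{\own:\ \Max\text{ wins }\own}\ \prod_{v\in V_0^{\own}}a_v\prod_{v\in V_1^{\own}}(b_v-a_v).
\]
It then suffices to show that $N$ is a \SharpP\ function, since the winning probability $N/D$ is obtained by a single evaluation of $N$ followed by a polynomial-time division.

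Next I would exhibit a nondeterministic polynomial-time machine $M$ on input $(G,\toss)$ whose number of accepting computations equals $N$. The machine first guesses the ownership assignment in a \emph{weighted} manner: for every node $v$ it guesses an integer $g_v\in\{1,\dots,b_v\}$ and sets $\own(v)=\Max$ if $g_v\le a_v$ and $\own(v)=\Min$ otherwise. For a fixed $\own$, the number of guess vectors $(g_v)_{v\in V}$ consistent with it is precisely $\prod_{v\in V_0^{\own}}a_v\prod_{v\in V_1^{\own}}(b_v-a_v)$, the weight appearing in $N$. After fixing the arena, $M$ must contribute exactly one accepting continuation per consistent guess vector when $\Max$ wins the arena $\own$, and none otherwise. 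For the reachability objective this is immediate: deciding the winner of a fixed reachability game is in polynomial time, so $M$ solves it deterministically and accepts iff $\Max$ wins, yielding a single accepting continuation exactly when $\Max$ wins.

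The main obstacle is the winner check for the parity and energy objectives, where no polynomial-time winner-deciding algorithm is known. The naive idea—have $M$ guess a polynomial-size certificate that $\Max$ wins and accept iff it is valid—would be correct only if each winning arena admitted a \emph{unique} certificate; a general \NP-style witness could be realised by several certificates and would cause $N$ to be over-counted. The resolution is to invoke the fact that deciding the winner of a parity (resp.\ energy) game lies in \UP\ $\cap$ co-\UP~\cite{Jurdzinski98,BouyerFLMS08}: by \UP\ membership there is a nondeterministic procedure with \emph{exactly one} accepting path when $\Max$ wins and \emph{zero} accepting paths when $\Max$ loses. Letting $M$ guess and verify this unique certificate therefore contributes exactly one accepting continuation per consistent guess vector when $\Max$ wins $\own$, and none when $\Min$ wins. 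In all three cases the number of accepting computations of $M$ equals $\sum_{\own}(\text{weight of }\own)\cdot\mathbf{1}[\Max\text{ wins }\own]=N$, so $N\in\SharpP$; since $D$ is computable in polynomial time, computing the winning probability is in \SharpP.
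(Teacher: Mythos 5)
Your proposal is correct and follows essentially the same route as the paper: both put the probability over the common denominator $D=\prod_v b_v$ and realise the weighted numerator as the number of accepting paths of a nondeterministic machine that guesses, for each node, one of $b_v$ values (so that a winning assignment $\own$ contributes $\prod_{v\in V_0^{\own}}a_v\prod_{v\in V_1^{\own}}(b_v-a_v)$ branches), then decides the induced reachability game in polynomial time or, for parity and energy, verifies the unique \UP\ certificate to avoid over-counting. Your write-up is, if anything, slightly more explicit than the paper's about why the weighted guessing yields exactly the numerator $N$.
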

\begin{proof}
    Let $\langle G, \toss \rangle$ be an instance of {\sgametwo}, where $G = (V, E)$ is a game graph and $\toss: V \to [0,1]$ is the coin-toss function.
For each $v \in V$ with $\toss(v) \in (0,1)$, represent $\toss(v)$ as a reduced fraction $a_v / b_v$ with positive coprime integers $a_v$ and $b_v$ encoded in binary.
We encode each possible ownership assignment by a string 
\modify{$str$ consisting of $|V|$ parts, where each part $i$ ($1 \leq i \leq |V|$) corresponds to a node $v_i \in V$, and the part $i$ of $str$ determines ownership as follows:
\begin{itemize}
\item if the $i$-th part of $str$ encodes a value $< a_{v_i}$, then $v_i$ is assigned to $\Max$;
\item if the $i$-th part of $str$ encodes a value $\ge a_{v_i}$ and $< b_{v_i}$ it is assigned to $\Min$;
\item otherwise, it is \emph{not} a legal encoding and should be ignored. 
\end{itemize}
This encoding is polynomially bounded, as the length of each string is bounded by $\max_{v \in V}\lceil \log(b_v)\rceil \cdot |V|$.
}

Each assignment $str$ induces a standard two-player game with a reachability, parity, or energy objective.
For the reachability objective, deciding whether $\Max$ has a winning strategy in the induced game is in $\mathbf{P}$,
so counting the number of ownership assignments where $\Max$ wins is in~\SharpP.

For parity and energy objectives, each induced game admits a unique short (polynomial-size) certificate witnessing $\Max$’s victory,
as deciding the winner in parity or energy games is in~\UP~\cite{Jurdzinski98,BouyerFLMS08}.
Using the nondeterministic power of a \SharpP\ machine, we can guess and verify this unique certificate for each ownership assignment.
Thus, the number of accepting computation branches equals the number of ownership assignments in which $\Max$ wins.

Therefore, computing the number of ownership assignments (and hence the probability) in which $\Max$ has a winning strategy in a {\sgametwo} with a reachability, parity, or energy objective is in~\SharpP.
\qed
\end{proof}

\end{document}